\newtheorem{theorem}{Theorem}
\newtheorem*{theorem*}{Theorem}
\newtheorem*{conjecture*}{Conjecture}
\newtheorem*{remark*}{Remark}
\newtheorem{lemma}[theorem]{Lemma}
\newtheorem*{lemma*}{Lemma}
\newtheorem*{proposition*}{Proposition}
\newtheorem{corollary}[theorem]{Corollary}
\newtheorem*{corollary*}{Corollary}
\newtheorem{property}[theorem]{Property}
\newtheorem*{property*}{Property}
\def\b1{{\bf 1}}
\title{Matroidal Degree-Bounded Minimum Spanning Trees}
\author{Rico Zenklusen\thanks{Dept.~of Mathematics, MIT,
Cambridge. E-mail: {\tt ricoz@math.mit.edu}.
Supported by Swiss National Science Foundation grant
PBEZP2-129524, by NSF grants CCF-1115849 and CCF-0829878, and by
ONR grants N00014-11-1-0053 and N00014-09-1-0326.
}}
\begin{document}

\maketitle

\begin{abstract}

We consider the minimum spanning tree (MST) problem under the
restriction that for every vertex $v$, the edges of the tree
that are adjacent to $v$ satisfy a given family of constraints.
A famous example thereof is 
the classical degree-bounded MST problem, where for every vertex
$v$, a simple upper bound on the degree is imposed.
Iterative rounding/relaxation algorithms became the tool of choice for
degree-constrained network design problems.
A cornerstone for this development
was the work of Singh and Lau~\cite{singh_2007_approximating},
who showed that for the degree-bounded MST problem,
one can find a spanning tree violating each degree bound
by at most one unit and with cost at most the cost of an optimal solution
that respects the degree bounds.

However, current iterative rounding approaches face several limits
when dealing with more general degree constraints.
In particular, when several constraints are imposed on the edges
adjacent to a vertex $v$, as for example when a partition of
the edges adjacent to $v$ is given and only a fixed number of
elements can be chosen out of each set of the partition,
current approaches might violate each of the constraints by
a constant, instead of violating the whole family of constraints
by at most a constant number of edges.
Furthermore, it is also not clear how previous iterative rounding
approaches can be used for degree constraints where some edges are in
a super-constant number of constraints.

We extend iterative rounding/relaxation approaches both
on a conceptual level as well as aspects involving their analysis
to address these limitations.
Based on these extensions, we 
present an algorithm for the degree-constrained MST problem
where for every vertex $v$, the edges adjacent to $v$ have
to be independent in a given matroid. The algorithm returns
a spanning tree of cost at most OPT such that for every vertex $v$,
it suffices to remove at most $8$ edges from the spanning tree
to satisfy the matroidal degree constraint at $v$.
\end{abstract}

\section{Introduction}

Recently, much effort has been put on designing approximation algorithms
for degree-constrained network design problems.
This development was motivated by various applications as for example
VLSI design, vehicle routing, and applications in communication
networks~\cite{fuerer_1994_approximating,
bauer_1995_degreeconstrained, ravi_2001_approximation}.
One of the most prominent and elementary problems here, which 
attracted lots of attention in recent years, are 
degree-constrained (MST) problems.

In the most classical setting, known as the
\emph{degree-bounded MST problem}, the problem is to find a 
spanning tree $T\subseteq E$ of minimum cost in a graph $G=(V,E)$
under the restriction that the degree of each vertex $v$ with
respect to $T$ is at most some given value $B_v$.
Since checking feasibility of a degree-bounded MST problem is already
NP-hard, interest arose in finding low-cost spanning trees that violate
the given degree constraints slightly.
A long chain of
papers (see~\cite{fuerer_1994_approximating, koenemann_2002_matter,
koenemann_2003_primaldual, chaudhuri_2009_pushrelabel,
chaudhuri_2009_what} and references
therein)
led to algorithms with various trade-offs between cost of the
spanning tree and violation of the degree bounds.
In recent years, important progress was achieved for the degree-bounded
MST problem, which also led to a variety of new techniques.
Goemans~\cite{goemans_2006_minimum} showed how to find a spanning tree
violating each degree constraint by at most two units, and whose cost is
bounded by the cost OPT of an optimum spanning tree that satisfies the
degree constraints.
Enhancing the iterative rounding framework introduced by
Jain~\cite{jain_2001_factor} with a relaxation step,
Singh and Lau~\cite{singh_2007_approximating} obtained a stronger
version of the above result, which is essentially best possible,
where degree constraints are only violated by at most one unit.
They work with an LP relaxation of the problem, and
iteratively drop degree constraints from the LP that cannot 
be violated by more than one unit in later iterations.
The adapted LP is then solved again to obtain a possibly
sparser basic solution that allows for further degree
relaxations.
Edges not used in the current optimal solution to the LP are removed
from the graph, and edges that have a weight of one are fixed,
while updating degree bounds accordingly.
A degree bound at a vertex $v$ is removed whenever 
it is at most one unit lower
than the current number of edges adjacent to $v$.

We are interested in obtaining results of similar strength
for more general degree bounds.
Consider for example the following type of degree constraints:
for every vertex $v$, a partition $E^v_1,\dots, E^v_{n_v}$
of the set $\delta(v)$ of edges adjacent to $v$ is given,
and within each set $E^v_i$ of the partition, only a given number of edges
can be chosen.
The algorithm of Singh and Lau~\cite{singh_2007_approximating}
as well as the one of Goemans~\cite{goemans_2006_minimum}
can easily be adapted
to this setting. (In particular, the algorithm of Singh and Lau
was even presented in this precise setting.)
However, with both of these approaches, the constraint imposed by
each set $E^v_i$ can be violated by a constant.
We are interested in having at most a constant violation
over all degree constraints at $v$,
i.e., for every vertex $v\in V$, at most a constant number of edges
have to be removed from the spanning tree to satisfy all
constraints at $v$.
Another more general example that will be useful to
illustrate limits of current methods
is obtained when imposing constraints for
each vertex $v$ on a laminar family on the edges adjacent to $v$, instead
of only considering a partition.

Adapting Goemans' algorithm to these stricter bounds on the degree violation
seems to be difficult, since a crucial step of this algorithm is to
cover the support $E^*$ of a basic solution to the natural LP relaxation
by a constant number of spanning trees (for the degree-bounded MST
problem, Goemans showed~\cite{goemans_2006_minimum} that two spanning
trees suffice).
This result allows for orienting the edges in $E^*$ such that every vertex
has at most a constant number of incoming arcs, at most one in each spanning
tree. Dropping for every vertex all incoming arcs from its degree constraint
then leads to a matroid intersection problem, whose solution violates
each degree constraint by at most a constant.
To be able to decompose $E^*$ into a constant number of spanning trees, one
needs to show that for any subset of the vertices $S\subseteq V$, only
a linear number (in $|S|$) of edges have both endpoints in $S$.
In the classical degree-bounded MST problem, this 
sparseness property follows from the fact that when considering only
edges with both endpoints in $S$, there are at most a linear number
(in $|S|$) of linearly independent and tight spanning tree constraints due
to combinatorial uncrossing, and only a linear number of
degree constraints within $S$.
However, in more general settings as highlighted above, the number
of degree constraints within $S$ can be super-linear.

Iterative relaxation looks more promising for a possible
extension to generalized degree bounds. 
However, current iterative rounding approaches face several
limits when trying to adapt them.
In particular, when dealing with the partition bounds as explained
above, a simple adaptation of the relaxation rule, where for a
vertex $v$ all constraints at $v$ would be dropped as soon as
it is safe to do so due to a small support $E^*$, risks
to get stuck because there might be no 
vertex whose degree constraint can be relaxed.
Furthermore, previous approaches (as used in~\cite{singh_2007_approximating,
bansal_2009_additive}) to show that the support is sparse
fail in our setting because of a possible super-linearity of the total number
of degree constraints.
Additionally, previous iterative relaxation approaches
crucially rely on the property that any edge is in at
most a constant number of degree constraints to obtain
violations that are bounded by a constant.
However, this does not hold when dealing for example
with degree constraints given by a laminar family.

In this paper we show how to extend iterative relaxation approaches,
both from a conceptual point of view as well as aspects involving
their analysis, to tackle a wide class of MST problems with generalized
degree bounds, namely when the degree bounds for every vertex are given
by a matroid. In particular, this includes the partition bounds
and the more general laminar bounds mentioned above.

\paragraph{\textbf{Our results.}}

We present an iterative rounding/relaxation algorithm for finding a
\emph{matroidal degree-bounded MST}. The degree bounds are given as follows:
for every vertex $v$, a matroid $M_v=(\delta(v),\mathcal{I}_v)$ over
the ground set $\delta(v)$ is given with independent sets
denoted by $\mathcal{I}_v\subseteq 2^{\delta(v)}$.
The problem (without relaxed degree constraints) is to
find a spanning tree $T$ in $G$ satisfying
$T\cap \delta(v)\in \mathcal{I}_v$ $\forall v\in V$,
and minimizing a linear cost function $c:E \rightarrow \mathbb{R}_+$.
We say that a given spanning tree $T$ violates a degree constraint
$M_v$ by at most $k\in\mathbb{N}$ units, if it suffices to remove at
most $k$ edges $R\subseteq \delta(v)\cap T$ from $T$ to satisfy
the constraint $M_v$, i.e.,
$(T\setminus R)\cap \delta(v)\in \mathcal{I}_v$.
Hence, the partition and laminar bounds mentioned above correspond
to the case where all matroids $M_v$ are partition or laminar matroids,
respectively. We show the following.
\begin{theorem}
There is an efficient algorithm for the matroidal degree-bounded MST
problem that returns a spanning tree of cost at most the cost
of an optimal solution, and violates each degree bound by at most
$8$ units.
\end{theorem}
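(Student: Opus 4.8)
The plan is to run iterative rounding/relaxation on the natural LP relaxation
\[
\min\Big\{\, c^\top x \;:\; x(E)=|V|-1,\ x(E[S])\le|S|-1\ \ \forall\,\emptyset\neq S\subseteq V,\ \ x(\delta(v)\cap F)\le r_v(F)\ \ \forall\,v\in V,\ F\subseteq\delta(v),\ \ x\ge 0 \,\Big\},
\]
where $r_v$ is the rank function of $M_v$, and the matroid inequalities are handled via the matroid polytope, for which separation (hence optimization over the whole LP) is polynomial. Given a basic optimal solution $x^*$ with support $E^*$, the algorithm performs one of the following and recurses: (i) if $x^*_e=0$, delete $e$; (ii) if $x^*_e=1$, add $e$ to the tree, contract $e$ in the graph, and update each $M_v$ with $e\in\delta(v)$ by contracting $e$ in $M_v$; (iii) if some vertex $v$ has a matroid constraint that is ``safe to drop'', remove it from the LP. Throughout we keep the matroid constraints indexed by the \emph{original} vertices, so no new matroids are ever created: deletions of removed edges and contractions of fixed edges only shrink the ground sets $D_v\subseteq\delta(v)$. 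Each of (i)--(iii) strictly decreases the potential (number of edges of the current graph) $+$ (number of active matroid constraints), bounded initially by $|E|+|V|$, so the algorithm terminates in polynomially many iterations, and since the residual of the previous optimum is always feasible for the new LP, the total cost of fixed edges is at most the initial LP value, hence at most $\mathrm{OPT}$.

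Next I fix the relaxation rule and the violation bound together. Declare the matroid constraint at $v$ \emph{safe to drop} once its current ground set $D_v$ satisfies $|D_v|-r_v(D_v)\le 8$ (the threshold is pinned down by the counting argument below). For every $A\subseteq D_v$ one has $|A|-r_v(A)\le|D_v|-r_v(D_v)$, so once dropped, no matter how the remaining iterations delete or fix edges, the final tree $T$ satisfies $|T\cap D_v|-r_v(T\cap D_v)\le 8$; combined with the fact that the edges of $\delta(v)$ fixed before the drop form an independent set of $M_v$ (they are unit coordinates of a point in the matroid polytope, and contraction preserves this inductively), one gets $|T\cap\delta(v)|-r_v(T\cap\delta(v))\le 8$, i.e.\ removing at most $8$ edges of $T\cap\delta(v)$ restores independence in the original $M_v$. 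This yields the claimed violation of at most $8$ units, once the progress step below guarantees the iteration never gets stuck.

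The technical core is the progress lemma: if $0<x^*_e<1$ for all $e\in E^*$ and no vertex is safe to drop, then $x^*$ cannot be a vertex of the LP, a contradiction. I would pick a maximal linearly independent set of tight constraints: the tight spanning-tree constraints can be taken to form a laminar family $\mathcal L$ on $V$, and, since the tight sets of a matroid polytope at a fixed point are closed under union and intersection, for each vertex $v$ in the set $W$ of active constraints the tight matroid constraints can be taken to form a chain $F_1^v\subsetneq\cdots\subsetneq F^v_{k_v}$ of subsets of $E^*\cap\delta(v)$, with $|E^*|=|\mathcal L|+\sum_{v\in W}k_v$. The key new ingredient is an integrality observation: $x^*(F_1^v)=r_v(F_1^v)$ and $x^*(F^v_{i+1}\setminus F^v_i)=r_v(F^v_{i+1})-r_v(F^v_i)$ are positive integers, while each $x^*_e\in(0,1)$, so $F_1^v$ and every ``gap'' $F^v_{i+1}\setminus F^v_i$ contains at least two edges of $E^*$; hence $v$ is incident in $E^*$ to at least $2k_v$ edges inside $F^v_{k_v}$, and the non-drop condition $|D_v|-r_v(D_v)>8$ forces still further incident edges (either outside $F^v_{k_v}$, or a large fractional surplus $\sum_{e\in F^v_{k_v}}(1-x^*_e)$ inside it). I would then set up a fractional/integral token argument: give each edge of $E^*$ a fixed number of tokens, route one share to the smallest member of $\mathcal L$ containing both its endpoints and one share to each chain it meets at an endpoint, show by standard uncrossing bookkeeping that every set of $\mathcal L$ collects its required quota, show using the two-edges-per-gap fact that the length-$k_v$ chain at $v$ is paid for by $\Theta(k_v)$ \emph{distinct} edges rather than by per-constraint charges, use the non-drop surplus to leave a strictly positive number of tokens uncollected, and conclude $|E^*|<|\mathcal L|+\sum_v k_v$, a contradiction.

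The main obstacle is exactly this last step. Unlike the classical degree-bounded case, a single edge can lie in a super-constant number of the tight constraints at a vertex (all chain members above some index), so the naive ``each edge pays a constant to each constraint containing it'' accounting breaks down, and the total number of active matroid constraints can be super-linear in $|V|$. Making the redistribution close requires simultaneously: (a) exploiting the two-edges-per-gap structure so a chain is amortized over $\Theta(k_v)$ distinct edges at $v$; (b) using the relaxation rule to guarantee a bounded-away-from-zero token surplus at every non-dropped vertex; and (c) arranging the laminar and chain accounting so that an edge whose \emph{both} endpoints carry active matroid constraints is still not over-committed (this is where its token budget, and hence the final constant $8$, is fixed). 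Getting (a)--(c) to fit together, and in particular verifying that the laminar quota and the chain quota can be met from the same per-edge token budget, is the delicate part of the proof.
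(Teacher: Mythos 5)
Your plan correctly identifies the single most important ingredient in the paper's sparsity analysis: for a basic solution $x^*$, each chain of tight matroid constraints at a vertex $v$ has each ``gap'' $F^v_{i+1}\setminus F^v_i$ carrying a positive integer mass, so the chain length $k_v$ is at most $x^*(\delta(v))$, and summing gives $\sum_v k_v \le 2(|W|-1)$. This is exactly Lemma~4 in the paper and it is the right lever. However, there are two substantial gaps between your proposal and a working proof, and the first of them is a structural mistake the paper explicitly flags.

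First, your relaxation rule drops the \emph{entire} matroid constraint at $v$ as soon as $|D_v|-r_v(D_v)\le 8$. The paper's introduction explicitly cautions that such a ``drop the whole constraint at $v$'' rule can get stuck even for partition matroids: many vertices can have already been dropped, and then $\sum_{v\in W'}(|D_v|-r_v(D_v))\le 4(|W|-1)$ (which is what your chain-counting argument would give, since $r_v(D_v)\ge k_v$) no longer forces a small summand once $|W'|$, the set of vertices with active constraints, is much smaller than $|W|$. The averaging simply has no denominator to spread the $4(|W|-1)$ over. The paper escapes this by \emph{not} dropping whole constraints. Instead it performs a ``degree adaptation'' that removes only a carefully chosen subset $U\subseteq\delta(w)$ of edges from the matroid $N_w$ (via a union-then-contraction construction that keeps the operation a matroid and preserves LP feasibility), and it does so in two flavors: type~A removes the edges of $\delta(w)$ still covered by \emph{two} degree constraints, and type~B removes the edges of $\delta(w)$ not adjacent to a specially defined set $Q$ of nodes forced by tight spanning-tree constraints. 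The crucial invariant the paper maintains is not ``fewer constraints'' but ``every edge gradually ends up in at most one degree constraint'', at which point the LP is a matroid-intersection polytope and hence integral. Your relaxation rule doesn't have an analogous target; once stuck, there is no fallback.

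Second, you leave the progress lemma as a sketch and candidly call it the delicate part. It really is the whole proof, and none of the three bullet points (a)--(c) you list is routine. The paper needs, beyond Lemma~4, a genuinely new bound on the number of linearly independent tight spanning-tree constraints (Lemmas~5--6): it shows $|\mathcal L|\le|W|-1-\lfloor\frac12|S|\rfloor$ where $S$ counts degree-2 nodes outside $Q$ with $x(U)\neq 1$ for all $U\subseteq\delta(v)$, because such nodes cannot be ``sandwiched'' between two tight cuts. That extra slack of $\frac12|S|$ is then paired with a fractional token assignment that is itself intertwined with the definitions of $F_0,F_1,F_2$ (edges in $0,1,2$ degree constraints), with $Q$, and with the assumption that no type~B adaptation is applicable. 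Your token outline routes shares to laminar sets and to chains, but does not use the $Q$ mechanism or the partition $F_0,F_1,F_2$, which is where the budget actually closes. So the constant $8$ in your writeup is asserted, not derived; the paper's $8$ comes from a specific bookkeeping ($\lceil 4 \rceil$ per adaptation, at most one type~A and one type~B touching any original vertex, which itself requires the ``once removed, stays removed'' invariant that the fixing of tight spanning-tree constraints guarantees).

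In short: the sparsity observation you found is correct and is a real piece of the proof, but the relaxation rule is one the paper rules out, and the progress argument --- the heart of the theorem --- is not supplied. To make the approach work you would need something like the paper's partial edge removal, the two-type adaptation with the $Q$-set bookkeeping, and the refined bound on $|\mathcal L|$ via degree-2 nodes.
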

To overcome problems faced by previous iterative relaxation approaches,
we enhance the iterative relaxation step, and exploit polyhedral
structures to prove stronger sparseness results.
The polytope used as a relaxation of the matroidal degree-bounded
MST asks to find a point $x\in \mathbb{R}^E$ in the spanning tree
polytope such that for every vertex $v\in V$, the restriction
of $x$ to $\delta(v)$ is in the matroid polytope $P_{M_v}$ of $M_v$.

To be able to always find possible relaxation steps, our
iterative rounding procedure tries to achieve a somewhat weaker goal
than previous approaches. The algorithm of Singh and
Lau~\cite{singh_2007_approximating} relaxes degree constraints with
the goal to approach the spanning tree polytope, which is integral.
In our approach, the goal we pursue is to remove every edge $\{u,v\}$
from at least one of the two degree constraints at $u$ or $v$.
As soon as no edge
is part of both degree constraints at its endpoints, the problem
is a matroid intersection problem, since all degree constraints together
can be described by a single matroid over the support of the current
LP solution.
Thus, once we are in this situation, the current LP will be integral
and no further rounding steps are needed.
Hence, in our relaxation step,
we try to find a vertex $v$ such that we
can remove all edges adjacent to $v$ that are still in both degree
constraints from the degree constraint at $v$. 
Edges adjacent to $v$ that are only contained in the degree
constraint at $v$ will not be removed from the constraint $M_v$.
Our approach has thus some similarities with Goemans' method,
but instead of removing right at the start every edge from one
degree constraint, we do this iteratively and hereby profit from
additional sparseness that is obtained by solving the
LP relaxation after each degree adaptation step.
As we will see in Section~\ref{sec:algo}, the way
how we remove edges from a constraint is strictly speaking
not a relaxation, and we therefore prefer to use the term
\emph{degree adaptation} instead of degree relaxation.
The above degree adaptation step alone shows not to be sufficient
for our approach, since one might still end up in a situation
were no further degree adaptation can be performed because
the graph is too dense.
To obtain greater sparsity, we use a second type of degree adaptation,
where for some vertex $v$ we remove (almost) the full degree
constraint at $v$ if this cannot lead to a large violation of
the degree constraint at $v$.

The main step in the analysis is to prove that it is always possible
to apply at least one of two suggested degree adaptations. 
A first step in this proof is to show that the support of a basic
solution to the LP relaxation is sparse.
We obtain sparsity by showing that if there are $k\in \mathbb{N}$
linearly independent and tight constraints (with respect to the
current LP solution $x$) of the polytope $P_{M_v}$, then
$x(\delta(v))\geq k$. Since summing $x(\delta(v))$ over all vertices
is equal to $2(|V|-1)$, because $x(E)=|V|-1$ as $x$ is in the
spanning tree polytope, there are at most $2(|V|-1)$ linearly
independent and tight degree constraints.

The crucial part in the analysis is to show that
vertices to which no further degree adaptation can be performed
do not have very low degrees in average, implying that some
of the other vertices are likely to
have low degrees and therefore admit a degree constraint adaptation.
To prove this property, we exploit the interplay between degree bounds
and spanning tree constraints to show that any degree two node
can either be treated separately and allows for reducing the problem, or
implies a reduction in the maximum number of linearly independent and
tight spanning tree constraints.

\paragraph{\textbf{Related work.}}

The study of spanning trees with degree constraints can be traced
back to F\"urer and Raghavachari~\cite{fuerer_1994_approximating},
who presented an approximation algorithm for
the degree-bounded Steiner Tree problem which
violates each degree bound by at most one, but does
not consider costs.
This result generated much interest in the study of degree-bounded
network design problems, leading to numerous results and new
techniques in recent years for a variety of problems,
including degree-bounded arborescence problems,
degree-bounded $k$-edge-connected subgraphs,
degree-bounded submodular flows,
degree-bounded bases in
matroids (see~\cite{raghavachari_1996_algorithms,
ravi_2001_approximation,
klein_2004_approximation,
lau_2007_survivable,
lau_2008_additive,
kiraly_2008_degree,
bansal_2009_additive,
chekuri_2010_dependent,
bansal_2010_generalizations} and references therein).

Spanning tree problems with a somewhat different notion
of generalized degree bounds have been considered
in~\cite{bansal_2009_additive} and~\cite{bansal_2010_generalizations}.
In these papers, the term ``generalized degree bounds'' is used
as follows: given is a family of sets $E_1,\dots, E_k\subseteq E$,
and the number of edges that can be chosen out of each set
$E_i$ is bounded by some given value $B_i\in\mathbb{N}$.
In~\cite{bansal_2009_additive}, using an iterative relaxation algorithm,
whose analysis is based on a fractional token counting
argument, the authors show how to efficiently obtain a spanning tree
of cost at most OPT and violating each degree bound by at
most $\max_{e\in E}|\{i\in [k] \mid e\in E_i\}|$, the maximum
coverage of any edge by the sets $E_i$.
In~\cite{bansal_2010_generalizations}, a new iterative rounding approach
was presented for the problem when the sets $E_1,\dots, E_k$ correspond
to the edges $E_i=\delta(C_i)$ of a family of cuts $C_i\subseteq V$
for $i\in [k]$ that is laminar.
Contrary to previous settings where iterative rounding approaches
were applied, here, it is possible that an edge lies in a
super-constant number of degree constraints.
At each iteration, the algorithm reduces the number of degree
constraints by a constant factor, replacing some constraints with
new ones if necessary. This is done in such a way that degree
constraints are violated by at most a constant in every iteration,
leading to a spanning tree of cost at most OPT, that violates
each degree constraint by at most $O(\log(|V|))$.

\paragraph{\textbf{Organization.}}
In Section~\ref{sec:algo} we present
our algorithm for the matroidal degree-bounded MST problem.
The analysis of the algorithm is presented in Section~\ref{sec:analysis}.

\section{The algorithm}\label{sec:algo}

Since during the execution of our algorithm the underlying graph
will be modified, we denote by $H=(W,F)$ the current state of
the graph, whereas $G=(V,E)$ always denotes the original graph.
For brevity, terminology and notation is with respect to the current
graph $H$ when not specified further.
To distinguish between initial degree constraints and current degree
constraints, we denote by $N_w$ the current constraints
for $w \in W$---which will as well be of matroidal type---whereas
$M_v$ denotes the initial degree constraints 
at $v\in V$. The vertices of $H$ are called \emph{nodes}
since they might contain several vertices of $G$ due to
edge contractions.

The algorithm starts with $H=G$ and $N_v=M_v$ for $v\in V$, and the
LP relaxation we use is the following,
\begin{equation*}
(LP1)\qquad
\boxed{
\begin{array}{crcll}
\min & c^T x &&& \\
& x & \in & P_{st} & \\
& x\big |_{\delta(w)} & \in & P_{N_w} & \forall w\in W\\
\end{array}
}
\vspace{0.2em}
\end{equation*}
where $P_{st}$
denotes the spanning tree polytope of $H$, 
$P_{N_w}$ denotes the matroid
polytope that corresponds to $N_w$, and
$x\big |_{\delta(w)}$ denotes the vector obtained from $x\in \mathbb{R}^E$
by considering only the components that correspond to $\delta(w)$.

There is a set of nodes $Q=Q(H,x)\subseteq W$ that has a special role
in our algorithm due to its relation with tight spanning tree
constraints. 
The node set $Q$ is defined and used after having contracted edges 
of weight one. Hence, assume that $H$ does not contain any edge
$f\in F$ with $x(f)=1$.
Then $Q$ is defined as follows:
we start with $Q=\emptyset$ and as long as there is a node $w\in W$
such that $x(\delta(w)\cap F[W\setminus Q])=1$, where $F[W\setminus Q]$
is the set of all edges with both endpoints in $W\setminus Q$,
we add $w$ to $Q$. One can easily observe that $Q$ does not dependent
on the order in which nodes are added to it\footnote{The fact that $H$
does not contain $1$-edges is needed here to make sure that the order
is unimportant in the definition of $Q$. With $1$-edges it might
be that during the iterative construction of $Q$, one ends up
with two nodes connected by a single edge of weight one,
in which case any one of the two remaining nodes can
be included in $Q$, but not both. This is actually the only bad constellation
that leads to a dependency on the order in the definition
of $Q$.
}.
As we will see later, edges adjacent to these nodes can often be
ignored from degree constraints due to strong restrictions that
are imposed by the spanning tree constraints.

The box on top of the page gives a description of our algorithm, omitting 
details of how to deal with the matroidal degree bounds when
removing or contracting edges.
We discuss these missing points in the following.
\begin{figure}
\begin{center}
\begin{boxedminipage}{0.92\linewidth}
\textbf{Algorithm for Matroidal Degree-Bounded Minimum Spanning Trees}
\begin{enumerate}
\item Initialization: $H=(W,F) \leftarrow G=(V,E)$, $N_v\leftarrow M_v$
for $v\in V$.
\item While $|W|>1$ do 
\begin{enumerate}[a)]
\item\label{enum:solveLP} Determine basic optimal solution $x$ to $(LP1)$.
Delete all edges $f\in F$ with $x(f)=0$.
\item\label{enum:contract} Contract all edges $f\in F$ with $x(f)=1$.
\item\label{enum:fixConstraints} Fix a maximal family
of linearly independent and tight spanning tree constraints.
\item\label{enum:relaxA}
\emph{Type A degree adaptation}: for each node $w\in W$ such that
the set of all edges $U\subseteq \delta(w)$ that are still in
both degree constraints is non-empty and satisfies
$|U|-x(U)\leq 4,$
remove $U$ from the degree constraint $N_w$.
\item\label{enum:relaxB}
\emph{Type B degree adaptation}: for each node $w\in W$ such that
the set of all edges $U\subseteq \delta(w)$ contained in
the degree constraint $N_w$ but not adjacent to a node in $Q$
is non-empty and satisfies
$|U|-x(U)\leq 4,$
remove $U$ from the degree constraint $N_w$.
\end{enumerate}
\item Return all contracted edges.
\end{enumerate}
\end{boxedminipage}
\end{center}
\end{figure}

Notice, that a basic solution to $(LP1)$ can be determined
in polynomial time by the ellipsoid method, even if the
involved matroids are only accessible trough an independence oracle.
Depending on the matroidal degree bounds involved, $(LP1)$ can 
be solved more efficiently by using a polynomially-sized extended
formulation.

A tight spanning tree constraint, as considered in
step~\eqref{enum:fixConstraints}, corresponds to a set
$L\subseteq W, L\neq \emptyset$
such that $x(F[L])=|L|-1$.
\emph{Fixing} a tight spanning tree constraint means that
this constraint has to be fulfilled with equality in all
linear programs of type $(LP1)$ solved in future iterations.
It is well-known that if $supp(x)=F$, then any maximal family
of linearly independent and tight spanning tree constraints
with respect to $x$ defines the minimal face of the spanning
tree polytope on which $x$
lies~(see e.g.~\cite{goemans_2006_minimum}).
Hence, due to step~\eqref{enum:fixConstraints}, we have that
if the LP solution at some iteration of the algorithm is on
a given face of the spanning tree polytope, then all future
solutions to $(LP1)$ will be as well on this face.

Fixing tight spanning tree constraints shows to be useful since
they 
often imply strong conditions on the edges, which
can be exploited when having to make sure that degree
constraints are not violated too much. In particular, consider
a node $w\in Q$ which, in the iterative construction of $Q$,
could have been added as the first node, i.e., $x(\delta(w))=1$.
When fixing tight spanning tree constraints, one can observe that
any spanning tree satisfying those tight constraints with equality
contains precisely one edge adjacent to $w$.
Furthermore, the fixing of tight spanning tree constraints guarantees
that a node $w\in Q$ will stay in $Q$ in later iterations until
an edge adjacent to $w$ is contracted.
Hence, all edges being in some iteration adjacent to a node $w\in Q$,
will be adjacent to a node in $Q$ in all later iterations until
they are either deleted or contracted.
This property is important in our approach since a type B
degree adaptation ignores edges adjacent to $Q$, and we want
to make sure that an edge which is once ignored will never
be considered during a later type B degree adaptation.

\vspace{0.5em}
\noindent\textbf{Contracting and removing edges.}
To fill in the remaining details of our algorithm,
it remains to discuss how edges are contracted
and removed.
Throughout the algorithm, any degree constraint $N_w$ 
of a node $w$ containing the vertices $v_1,\dots, v_k\in V$
can always be written as a disjoint union of matroidal constraints
$N_{v_1}, \dots, N_{v_k}$, where $N_{v_i}$ corresponds to
the ``remaining'' degree bound at $v_i$ and is a matroid
over the edges $\delta(w)$ that are adjacent to $v_i$.
Whenever an edge $f=\{w_1,w_2\}$ of weight one is contracted in
step~\eqref{enum:contract} of the algorithm to obtain a new node $w$,
the new degree constraint $N_w$ at $w$ is obtained by taking a disjoint
union of the matroids $N_{w_1}/ f$ and $N_{w_2}/ f$, where
$N_{w_1}/ f$ and $N_{w_2}/ f$ correspond to the matroids obtained
from $N_{w_1}$ and $N_{w_2}$, respectively, by contracting $f$.
This operation
simply translates the degree constraints on $w_1$ and $w_2$ to
the merged node $w$.
The property that a degree bound on $w$ is a disjoint
union of degree bounds of the vertices represented by $w$,
is clearly maintained by this contraction.

As highlighted in the box, we adapt constraints by
\emph{removing} for some node $w\in W$
a set of edges $U\subseteq\delta(w)$ from the constraint $N_w$.
When removing $U$ from $N_w$, we construct a new degree constraint
given by a matroid $N_w'$ over the elements $\delta(w)$ such that
the following properties hold.
\begin{property}\label{prop:edgeRemoval}
\hspace{0mm}\\[-1em]
\begin{compactenum}[i)]
\item\label{item:disjoint} $N_w'$ is
a disjoint union of
matroidal constraints $N'_{v_1}, \dots, N'_{v_k}$
corresponding to vertices contained in $w$,
\item\label{item:freeElems} edges of $U$ are \emph{free elements}
of $N_w'$, i.e., if
$I$ is independent in $N_w'$ then $I\cup U$ is 
independent in $N_w'$,
\item\label{item:indepTrans} any independent set of $N'_{v_i}$
can be transformed into one of $N_{v_i}$ by removing at
most $\lceil|U|-x(U)\rceil$ edges, 
\item\label{item:stillFeasible} the previous LP solution
$x$ is still feasible with respect to $N_w'$,
i.e., $x\big |_{\delta(w)}\in P_{N_w'}$.
\end{compactenum}
\end{property}

Any removal operation satisfying the above
properties can be used in our algorithm.
Before presenting such a removal operation,
we first mention a few important points.
To avoid confusion, we want to highlight that removing $U$
from $N_w$ does not simply correspond to deleting the elements
$U$ from the matroid $N_w$.
For any edge $f\in \delta(w)$ that is free in $N_w$, we say that $f$
is \emph{not contained} in the degree constraint $N_w$, and it is
\emph{contained} otherwise. When all edges adjacent to a given
node $w$ are not contained in its degree constraint, which corresponds
to $N_w$ being a free matroid, we say that the node $w$ has no
degree constraint.

We now discuss how to remove a set of edges $U\subseteq \delta(w)$ from
$N_w$ to obtain an adapted degree bound $N'_w$ satisfying
Property~\ref{prop:edgeRemoval}.
Let $v_1,\dots, v_p\in V$ be all vertices contained in the node $w$,
and we consider the decomposition of $N_w$
into a disjoint union of matroids $N_{v_1}, \dots, N_{v_k}$, 
where $N_{v_i}$ for $i\in [k]$ corresponds to the
``remaining'' degree bound at $v_i$.
To remove $U$ from $N_w$, we adapt each matroid $N_{v_i}$ as follows
to obtain a new matroid $N'_{v_i}$. Let $S_i$ be the ground
set of $N_{v_i}$, i.e., all edges in $\delta(w)$ being adjacent
to $v_i$.
Let $M_1=(S_i,\mathcal{I}_1)$ be the matroid with independent sets
\begin{equation*}
\mathcal{I}_1=\{I\subseteq S_i\cap U \mid
|I|\leq |S_i\cap U|-\lfloor x(S_i\cap U) \rfloor\}.
\end{equation*}
Hence, $M_1$ is a special case of a partition matroid.
Let $M_2=M_1 \vee N_{v_i}$ be the union of the matroids
$M_1$ and $N_{v_i}$,
and let $M_3=M_2/(S_i\cap U)$ be the matroid obtained from
$M_2$ by contracting
$S_i\cap U$. The degree bound $N'_{v_i}$ is obtained 
by a disjoint union of $M_3$ and a free matroid over the elements
in $S_i\cap U$. The new degree constraint $N'_w$, that results
by \emph{removing} $U$ from $N_w$, is given by
the disjoint union of the matroids $N'_{v_1}, \dots, N'_{v_k}$.

\begin{lemma}\label{lem:removalProcOK}
The above procedure to remove elements from a degree constraint
satisfies Property~\ref{prop:edgeRemoval}.
\end{lemma}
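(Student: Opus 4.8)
The plan is to verify the four parts of Property~\ref{prop:edgeRemoval} one at a time, working with a fixed vertex $v_i$ contained in $w$ and the matroids $M_1, M_2 = M_1 \vee N_{v_i}, M_3 = M_2/(S_i \cap U)$, and then assembling the pieces over all $v_i$ and all $w$. Part~\ref{item:disjoint} is essentially immediate from the construction: $N'_w$ was \emph{defined} as the disjoint union of the $N'_{v_i}$, and each $N'_{v_i}$ is a matroid over the edges of $\delta(w)$ adjacent to $v_i$, so the decomposition property is preserved by fiat. Part~\ref{item:freeElems} is also structural: $N'_{v_i}$ is a disjoint union of $M_3$ (on $S_i \setminus U$, since contraction of $S_i \cap U$ removes those elements from the ground set) with a free matroid on $S_i \cap U$; taking the disjoint union over the $v_i$ shows $U = \bigcup_i (S_i \cap U)$ sits in the free part of $N'_w$, hence every element of $U$ is a coloop/free element, which is exactly the stated implication.

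The substantive parts are \ref{item:indepTrans} and \ref{item:stillFeasible}. For \ref{item:stillFeasible} I would show $x\big|_{S_i} \in P_{N'_{v_i}}$ for each $i$; since the matroid polytope of a disjoint union is the product of the factors' polytopes, this gives $x\big|_{\delta(w)} \in P_{N'_w}$. The rank function of $N'_{v_i}$ on a set $T \subseteq S_i$ decomposes along the free part and the $M_3$ part, and $r_{M_3}(T') = r_{M_2}(T' \cup (S_i \cap U)) - r_{M_2}(S_i \cap U)$ for $T' \subseteq S_i \setminus U$. Using the matroid-union rank formula $r_{M_2}(A) = \min_{B \subseteq A}\big(|A \setminus B| + r_{M_1}(B) + r_{N_{v_i}}(B)\big)$ and the fact that $x\big|_{S_i} \in P_{N_{v_i}}$ already holds (the previous LP solution was feasible), plus $x(S_i \cap U) \le |S_i \cap U| - \lfloor x(S_i \cap U)\rfloor + \lfloor x(S_i \cap U)\rfloor$ trivially and more usefully $r_{M_1}(B) = \min\{|B|,\, |S_i\cap U| - \lfloor x(S_i\cap U)\rfloor\} \ge x(B) $ is NOT automatic — one checks instead that for the binding term the slack introduced by contracting exactly matches $\lfloor x(S_i \cap U)\rfloor$ worth of capacity, so $x(T) \le r_{N'_{v_i}}(T)$ follows by a short case analysis on whether the minimizing $B$ in the union formula contains all of $S_i \cap U$ or not. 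I expect this rank inequality to be the main obstacle: one has to be careful that the floor in the definition of $\mathcal{I}_1$ is the right amount — enough to "absorb" the fractional mass $x(S_i \cap U)$ after contraction, but small enough that the integral guarantee in~\ref{item:indepTrans} survives.

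For part~\ref{item:indepTrans}, take $I$ independent in $N'_{v_i}$. Its intersection with the free part $S_i \cap U$ is unconstrained, and $I \setminus U$ is independent in $M_3$, hence $I' := (I \setminus U) \cup (S_i \cap U)$ is independent in $M_2 = M_1 \vee N_{v_i}$, so it splits as $I' = I_1 \cup I_2$ with $I_1$ independent in $M_1$ and $I_2$ independent in $N_{v_i}$. Then $I_2$ is a subset of $I$ that is independent in $N_{v_i}$, and $I \setminus I_2 \subseteq I_1 \cup (S_i \cap U \setminus I') \subseteq S_i \cap U$; moreover $|I \setminus I_2| \le |S_i \cap U| - |I_2 \cap (S_i\cap U)| \le |S_i \cap U| - (|I'| - |I_1|) $, and since $|I_1| \le |S_i \cap U| - \lfloor x(S_i \cap U)\rfloor$ one gets $|I \setminus I_2| \le \lceil |S_i \cap U| - x(S_i \cap U)\rceil$ after combining with $|I'| \ge |I \setminus U| + |S_i \cap U|$ — wait, more carefully: the number of edges to delete is $|I| - |I_2|$, which I bound by $\lfloor x(S_i \cap U)\rfloor$'s complement; summing the bound $\lceil |S_i\cap U| - x(S_i\cap U)\rceil$ over $i$ is dominated by $\lceil |U| - x(U)\rceil$ since $\lceil a\rceil + \lceil b\rceil$ can exceed $\lceil a+b\rceil$, so here I would instead argue globally: across all $v_i$ the removed edges are disjoint subsets of $U$, the total fractional mass removed capacity is $\sum_i \lfloor x(S_i \cap U)\rfloor$, and $|U| - \sum_i \lfloor x(S_i\cap U)\rfloor \le \lceil |U| - x(U)\rceil$ because each floor loses less than one and rounding up once at the end absorbs it. This bookkeeping — reconciling the per-vertex floors with the single ceiling $\lceil |U| - x(U)\rceil$ and then with the stated integer bound $|U| - x(U) \le 4$ forcing at most $4$ deletions per node in the final accounting of Theorem~\ref{...} — is where I would spend the most care.
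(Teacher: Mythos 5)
Your treatment of parts~\eqref{item:disjoint} and~\eqref{item:freeElems} matches the paper and is fine. For part~\eqref{item:stillFeasible} you take a genuinely different route: the paper does not invoke the matroid-union rank formula at all, but instead exhibits the point $z_i$ that agrees with $x_i$ off $S_i\cap U$ and equals $1$ on $S_i\cap U$, observes $z_i-x_i\in P_{M_1}$, hence $z_i\in P_{M_2}$, then projects through the contraction $M_3=M_2/(S_i\cap U)$ and pads with the free part. Your rank-formula case analysis would also go through (I checked both branches of $\min\{|B_1|,\,|S_i\cap U|-\lfloor x(S_i\cap U)\rfloor\}$, using $x\in P_{N_{v_i}}$ and $|S_i\cap U|-|B_1|+x(B_1)\ge x(S_i\cap U)$), so this is a valid alternative, just longer than the paper's direct polyhedral construction.

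Where you go astray is at the end of part~\eqref{item:indepTrans}. Property~\ref{prop:edgeRemoval}\eqref{item:indepTrans} is a \emph{per-vertex} statement: for each fixed $i$, an independent set of $N'_{v_i}$ must be convertible to one of $N_{v_i}$ by deleting at most $\lceil |U|-x(U)\rceil$ edges. You already obtained the right per-vertex bound — the number of deletions is $|I_1|\le |S_i\cap U|-\lfloor x(S_i\cap U)\rfloor$ — and all that remains is the elementary observation
\begin{equation*}
|S_i\cap U|-\lfloor x(S_i\cap U)\rfloor \;\le\; |U|-\lfloor x(U)\rfloor \;=\; \lceil |U|-x(U)\rceil,
\end{equation*}
which holds because $|U\setminus S_i|$ is a nonnegative integer strictly exceeding $\lfloor x(U)\rfloor-\lfloor x(S_i\cap U)\rfloor - 1$. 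There is nothing to sum over $i$. The global reconciliation you then attempt, namely the claim
\begin{equation*}
|U|-\sum_i \lfloor x(S_i\cap U)\rfloor \;\le\; \lceil |U|-x(U)\rceil,
\end{equation*}
is in fact false: take $|U|=4$ split as $|S_1\cap U|=|S_2\cap U|=2$ with $x(S_1\cap U)=x(S_2\cap U)=0.9$, so the left side is $4-0=4$ while the right side is $\lceil 2.2\rceil = 3$. Fortunately this step is not needed; dropping the summing argument and keeping only the per-$i$ inequality closes the proof.
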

\begin{proof}
By construction, when removing a set $U\subseteq \delta(w)$
from a degree bound $N_w$, which can be written as a disjoint
unions of $N_{v_1}, \dots, N_{v_k}$, a matroidal bound
$N'_w$ is determined which is a disjoint union of
$N_{v_1}',\dots,N_{v_k}'$.
Hence point~\eqref{item:disjoint} of Property~\ref{prop:edgeRemoval}
holds.

Let $S_i$ be the ground set of the matroids $N'_{v_i}, N_{v_i}$
for $i\in [k]$.
Since $N'_w$ is a disjoint union of $N'_{v_1},\dots, N'_{v_k}$,
it suffices for point~\eqref{item:freeElems} to prove
that if $I'$ is independent in $N'_{v_i}$ then $I'\cup (S_i\cap U)$
is independent in $N'_{v_i}$.
This follows since $N'_{v_i}$ was obtained by a disjoint union of
the matroid $M_3$, as defined above, and
a free matroid over $S_i\cap U$.

For point~\eqref{item:indepTrans}, consider an independent set
$I'$ in $N_{v_i}'$.
Since all edges in $U\cap S_i$ are free in $N_{v_i}'$, we can assume
$(U\cap S_i)\subseteq I'$.
Consider how the matroid $N_{v_i}'$ was constructed by the
use of the matroids $M_1, M_2, M_3$. 
We start by observing that $U\cap S_i$ is an independent set
in $M_2=M_1\vee N_{v_i}$.
Let $r_i$ be the rank function of $N_{v_i}$, and
$r_2$ be the rank function of $M_2$. Since $x\in P_{N_{v_i}}$,
we have that $r_i(S_i\cap U)\geq x(S_i\cap U)$.
Furthermore, since $M_2=M_1\vee N_{v_i}$ and any
$|S_i\cap U|-\lfloor x(S_i\cap U)\rfloor$ elements of $S_i\cap U$
are independent in $M_1$, we have
\begin{equation*}
r_2(S_i\cap U)= \min\{|S_i\cap U|, r_i(S_i\cap U)+
|S_i\cap U|-\lfloor x(S_i\cap U)\rfloor\}=|S_i\cap U|,
\end{equation*}
showing independence of $S_i\cap U$ in $M_2$.
Because $N_{v_i}'$ was obtained by a disjoint
union of the matroid $M_3$ and
a free matroid over the elements $S_i\cap U$, we can write
$I'=I_3\cup (S_i\cap U)$ with $I_3$ independent in $M_3$.
Furthermore, as $M_3= M_2/ (S_i\cap U)$ and $S_i\cap U$ is
independent in $M_2$, the set $I'$ is independent
in $M_2$.
As $M_2=M_1\vee N_{v_i}$, we have
$I'=I_1 \cup I$, with $I_1$ independent in $M_1$ and
$I$ independent in $N_{v_i}$.
Since $M_1$ is a matroid
of rank $|S_i\cap U|-\lfloor x(S_i\cap U)\rfloor$, we have
that $I$ is obtained from $I'$ by removing at most
$|I_1|\leq |S_i \cap U|-\lfloor x(S_i\cap U) \rfloor
\leq |U|-\lfloor x(U) \rfloor$ elements as desired.

Let $x_i = x \big|_{S_i}$ for $i\in [k]$.
To show point~\eqref{item:stillFeasible}, it suffices to
prove that $x_i\in P_{N'_{v_i}}$ $\forall i\in [k]$,
since $N_w'$ is a disjoint union of $N'_{v_1},\dots,N'_{v_k}$.
Let $z_i\in [0,1]^{S_i}$ be given by
\begin{equation*}
z_i(f) = \begin{cases}
1      & \text{if } f \in S_i\cap U,\\
x_i(f) & \text{if } f\in S_i\setminus U.
\end{cases}
\end{equation*}
Observe that $z_i-x_i \in P_{M_1}$ because
the support of $z_i-x_i$ is a subset of $S_i\cap U$,
$\lVert z_i-x_i \rVert_1 = |S_i\cap U|- x(S_i\cap U)$
and any $|S_i\cap U|-\lfloor x(S_i\cap U) \rfloor$ elements
of $S_i\cap U$ are independent in $M_1$.
Hence $z_i\in P_{M_2}$, since $M_2=M_1\vee N_{v_i}$,
$x_i \in P_{N_{v_i}}$ and $z_i-x_i\in P_{M_1}$.
As $M_3=M_2/(S_i\cap U)$, we have that the restriction
of $z_i$ on $S_i\setminus U$, which is equal to
$x_i\big |_{S_i\setminus U}$, is in $P_{M_3}$.
Since $N_{v_i}'$ is the union of $M_3$ and a
free matroid over $S_i\cap U$, this finally implies
that $x_i\in P_{N'_{v_i}}$.

\end{proof}

\section{Analysis of the algorithm}\label{sec:analysis}

\begin{lemma}\label{lem:atMost2Adaptations}
During the execution of the algorithm, for every vertex $v\in V$,
at most one constraint adaptation of type A and one of type B is
performed that removes edges of $\delta(v)$ from degree
constraints containing $v$.
\end{lemma}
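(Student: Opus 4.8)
The plan is, for each fixed vertex $v\in V$, to pin down a set of edges whose non-emptiness is equivalent to the statement ``a type~A (resp.\ type~B) adaptation removes edges of $\delta(v)$ from a degree constraint containing $v$'', and to show that performing such an adaptation empties this set permanently. At any time $v$ lies in a unique node $w$, and every edge of the current graph incident to $v$ belongs to $\delta(w)$ (edges internal to $w$ are self-loops and have been deleted), so the only relevant adaptations are those performed at $w$.

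First I would isolate the monotonicity on which everything rests. Let $D_v$ denote the set of edges of the current graph that are incident to $v$ and \emph{contained} in the current degree constraint at $v$, i.e.\ not free in the matroid $N_v$ occurring in the disjoint-union decomposition of $N_w$ (with $v\in w$). I claim $D_v$ only shrinks during the execution. Edge deletions obviously only remove elements of $D_v$. When a weight-one edge $f$ is contracted, each surviving per-vertex matroid $N_v$ is replaced by $N_v$ itself or by $N_v/f$, and matroid contraction never turns a free element (coloop) into a non-free one; hence no new edge enters $D_v$. Finally, a removal operation applied to $N_w$ makes the removed edges free elements of $N_w'$, hence, by Property~\ref{prop:edgeRemoval}\eqref{item:disjoint} and~\eqref{item:freeElems} together with the fact that an element of a disjoint union of matroids is a coloop iff it is a coloop of the part containing it, free elements of the corresponding $N_{v_i}'$; so again $D_v$ can only lose elements. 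The same argument shows that $D_u$ is monotone non-increasing for \emph{every} vertex $u$.

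For type~A: a type~A adaptation that removes edges of $\delta(v)$ is performed at the node $w\ni v$ on the set $U$ of all edges of $\delta(w)$ still in both of their endpoints' degree constraints. Since $\delta(v)$-edges lie in $\delta(w)$, an edge $e\in\delta(v)$ belongs to $U$ precisely when $e\in D_v$ and $e\in D_{v'}$, where $v'$ is the (fixed, as long as $e$ survives) vertex at the other end of $e$; call this set $B_v\subseteq D_v$. Thus a type~A adaptation removes edges of $\delta(v)$ exactly when $B_v\neq\emptyset$, and such a step removes all of $B_v$ from $N_w$, so $B_v=\emptyset$ immediately afterwards. But $B_v$ is monotone non-increasing: if $e\in B_v$ at some time, then $e$ is present and lies in $D_v\cap D_{v'}$, hence by the previous paragraph lay in $D_v\cap D_{v'}$ at every earlier time it was present, so it lay in $B_v$ then. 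Therefore $B_v$ stays empty once empty, and at most one type~A adaptation can remove edges of $\delta(v)$.

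For type~B: the argument is identical with $B_v$ replaced by the set $C_v$ of edges $e\in D_v$ that are \emph{not} adjacent to a node of $Q$. A type~B adaptation touching $\delta(v)$ is carried out at $w\ni v$ on the set $U$ of edges of $\delta(w)$ contained in $N_w$ and not adjacent to a node of $Q$; restricted to $\delta(v)$ this set is exactly $C_v$, so after the step $C_v=\emptyset$. The only extra point is that $C_v$ is again monotone non-increasing: this uses the monotonicity of $D_v$ together with the property recorded in Section~\ref{sec:algo} (a consequence of fixing tight spanning tree constraints) that an edge adjacent to a node of $Q$ in some iteration stays adjacent to a node of $Q$ in all later iterations until it is deleted or contracted; equivalently, among surviving edges the family ``not adjacent to a node of $Q$'' can only shrink. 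Hence $C_v$ remains empty once emptied, giving at most one type~B adaptation removing edges of $\delta(v)$. I expect the one place demanding care to be the behaviour of $D_v$, $B_v$ and $C_v$ under edge contraction: on the matroid side this is the fact that contraction preserves coloops, and on the $Q$ side it is precisely the persistence property cited above.
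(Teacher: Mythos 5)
Your proof is correct and follows essentially the same approach as the paper's: both hinge on the observation that a type A adaptation at the node containing $v$ removes \emph{all} edges of $\delta(v)$ still in both endpoints' constraints, a type B adaptation removes \emph{all} edges of $\delta(v)$ in the constraint and not adjacent to $Q$, and the persistence of $Q$-adjacency (a consequence of fixing tight spanning tree constraints) rules out the one way an edge could re-enter consideration for a later type B step. Your formalization via the monotone sets $D_v$, $B_v$, $C_v$ and the coloop-preservation fact under contraction makes explicit some monotonicity that the paper leaves implicit, but it is the same argument.
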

\begin{proof}
When a type A degree adaptation is applied to a node
$w\in W$ that contains $v$,
no further type A degree adaptation can remove any edges
in $\delta(v)\cap F$ from the constraint
containing $v$, since those edges are not anymore contained
in both degree constraints at their endpoints.

Similarly, when a type B degree adaptation is applied
to a node $w$ that contains $v$, all edges in
$\delta(w)\cap F[W\setminus Q]$ are removed from
the degree constraint at $w$ and thus cannot be
removed again at a later type B degree adaptation.
Hence, the only possibility to remove further edges
adjacent to $v$ in a later type B degree adaptation is that some
edge $f\in F$ which was---at some iteration of the algorithm---not
considered for a possible removal by a type B adaptation
because of being adjacent to a node in $Q$, can be removed
by a type B adaptation at a later stage.
However as already discussed,
since we fix all tight spanning tree constraints, 
an edge that is adjacent to a node $w\in Q$ in some
iteration, will remain so until it is either
deleted or contracted in step~\eqref{enum:solveLP}
or~\eqref{enum:contract} of the algorithm.
Hence, this ``bad constellation'' can never occur.
\end{proof}

We exploit that our removal operation satisfies
point~\eqref{item:indepTrans} of Property~\ref{prop:edgeRemoval} to bound
the maximum possible degree violation.
In particular, for each vertex $v\in V$,
every time edges $U$ with $U\cap \delta(v)\neq \emptyset$
are removed from the current degree constraint $N_w$ at the
node $w$ that contains $v$, the degree constraint at $v$
can be violated at most by an additional $\lceil |U| - x(U) \rceil$ units.
Since we only perform degree adaptations for sets $U$ with
$|U|-x(U)\leq 4$, and Lemma~\ref{lem:atMost2Adaptations} guarantees
that at most two adaptations are performed that involve the degree constraint
at $v$, we obtain the following result.
\begin{corollary}
If the algorithm terminates, then the returned tree violates each
degree constraint by at most $8$ units.
\end{corollary}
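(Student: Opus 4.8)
The corollary follows almost immediately from the machinery already in place; the real content is bookkeeping, so my plan is to assemble the pieces in the right order.

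First I would fix a vertex $v\in V$ and track how the ``remaining'' degree constraint at $v$ evolves. By the discussion of contracting and removing edges, at every iteration the node $w$ containing $v$ has a constraint $N_w$ that decomposes as a disjoint union of matroids, one of which, call it $N_v^{\mathrm{cur}}$, is the remaining bound at $v$; contraction only translates this matroid, and the only operations that genuinely alter it are the type A and type B degree adaptations, which invoke the removal procedure of Lemma~\ref{lem:removalProcOK}. So the entire history of $N_v^{\mathrm{cur}}$ is: start at $M_v$, apply the removal procedure at most twice (once of each type, by Lemma~\ref{lem:atMost2Adaptations}), and translate under contractions in between.

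Next I would quantify the damage of one removal step. Suppose a removal is performed at node $w\supseteq v$ for a set $U\subseteq\delta(w)$, and let $N_v'$ be the new remaining matroid at $v$, obtained from $N_v^{\mathrm{old}}$. By point~(\ref{item:indepTrans}) of Property~\ref{prop:edgeRemoval}, any independent set of $N_v'$ can be turned into an independent set of $N_v^{\mathrm{old}}$ by deleting at most $\lceil |U|-x(U)\rceil$ edges. Since both type A and type B adaptations only fire when $|U|-x(U)\le 4$, and $|U|-x(U)$ is deleting at most $4$ edges each time (the ceiling of a quantity $\le 4$ is $\le 4$). Now I would chain this: if the algorithm terminates, the returned spanning tree $T$ satisfies $T\cap\delta(w)\in\mathcal I(N_w)$ for the final node set, hence $T\cap\delta(v)$ is independent in the final remaining matroid $N_v^{\mathrm{final}}$. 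Applying point~(\ref{item:indepTrans}) once for the (at most one) type B step and once for the (at most one) type A step, and noting that translation under contraction preserves independent sets, I can delete at most $4+4=8$ edges from $T\cap\delta(v)$ to land in $M_v$. That is exactly the statement that the degree constraint at $v$ is violated by at most $8$ units. Since $v$ was arbitrary, the corollary follows.

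The only subtlety — and the one place I would be careful — is making sure the two invocations of point~(\ref{item:indepTrans}) compose correctly: the second removal is applied to the matroid produced by the first, so I need that the ``at most $4$ edges'' guarantees add rather than interact badly. This is fine because each invocation is a statement about transforming an independent set of the \emph{current} matroid into one of the \emph{immediately preceding} matroid by a bounded deletion, so composing them along the chain $N_v^{\mathrm{final}}\to N_v^{\mathrm{mid}}\to M_v$ just adds the bounds; contractions in between contribute nothing since an independent set of a contracted matroid is independent in the original. I would also remark that if $\delta(v)$ is never touched by a type A (resp. type B) adaptation the corresponding bound is $0$, so $8$ is a worst case. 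No further input is needed; everything used here — Property~\ref{prop:edgeRemoval}, Lemma~\ref{lem:removalProcOK}, and Lemma~\ref{lem:atMost2Adaptations} — is already established.
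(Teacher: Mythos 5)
Your proposal is correct and follows essentially the same approach as the paper: use Lemma~\ref{lem:atMost2Adaptations} to bound the number of removal operations touching $\delta(v)$ by two, invoke point~(\ref{item:indepTrans}) of Property~\ref{prop:edgeRemoval} to bound the cost of each such operation by $\lceil |U|-x(U)\rceil\le 4$ given the threshold in the adaptation rules, and add the bounds. The paper states this more tersely but relies on exactly the same chaining argument you spell out; your extra observation that contraction translates independence (so the final tree's independence can be pulled back through the chain) is a useful explicit remark that the paper leaves implicit.
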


A main step for proving that we can always apply one of the two
suggested degree adaptations, is to prove that a basic solution to
$(LP1)$ is sufficiently sparse.
A first important building block for proving sparsity is the
following result.

\begin{lemma}\label{lem:indepDegConstraints}
Let $x$ be any solution to $(LP1)$ whose support equals $F$.
Then for every node $w\in W$, the maximum number of linearly independent
constraints of the matroid polytope $P_{N_w}$ that are tight with
respect to $x$, is bounded by $x(\delta(w))$.
\end{lemma}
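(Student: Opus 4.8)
The plan is to show that the tight constraints of $P_{N_w}$ at $x$ can be "charged" to the value $x(\delta(w))$ using the classical uncrossing structure of tight sets in a matroid polytope. Recall that the facets of $P_{N_w}$ are of the form $x(S) \le r(S)$ for $S \subseteq \delta(w)$, where $r$ is the rank function of $N_w$ (together with nonnegativity constraints $x(f) \ge 0$, which cannot be tight here since $\mathrm{supp}(x) = F$). Let $\mathcal{F} = \{S_1, \dots, S_m\}$ be the family of tight sets realizing a maximal collection of linearly independent tight rank constraints, i.e. $x(S_j) = r(S_j)$ for each $j$ and the characteristic vectors $\chi_{S_1}, \dots, \chi_{S_m}$ are linearly independent. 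Since $r$ is submodular and $x(\cdot)$ is modular, the family of all tight sets is closed under union and intersection; by the standard uncrossing argument one may therefore assume, without decreasing the span, that $\mathcal{F}$ forms a chain $S_1 \subsetneq S_2 \subsetneq \dots \subsetneq S_m$ (a laminar family that is linearly independent and whose characteristic vectors span the same space; a chain suffices because a linearly independent laminar family on which all sets are nonempty and nested-or-disjoint can be refined, but here since $x > 0$ everywhere the relevant span argument gives a chain directly — I would spell this out carefully).

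First I would record that for a chain $\emptyset \subsetneq S_1 \subsetneq \dots \subsetneq S_m$ the linear independence of $\chi_{S_1}, \dots, \chi_{S_m}$ is automatic, so the content is purely the bound $m \le x(\delta(w))$. The key inequality is that the rank function is such that each "level" of the chain contributes at least one unit of $x$-mass. Concretely, set $S_0 = \emptyset$ and consider the differences $S_j \setminus S_{j-1}$ for $j = 1, \dots, m$; these are nonempty (as the chain is strict) and pairwise disjoint, and their union is contained in $\delta(w)$. I would then argue that $x(S_j \setminus S_{j-1}) = x(S_j) - x(S_{j-1}) = r(S_j) - r(S_{j-1}) \ge 1$: the last inequality holds because $r(S_j) > r(S_{j-1})$ — indeed if $r(S_j) = r(S_{j-1})$ then by submodularity/monotonicity every element of $S_j \setminus S_{j-1}$ is spanned by $S_{j-1}$, which would force $x(S_j \setminus S_{j-1}) = 0$ and hence $x(f) = 0$ for some $f \in F$, contradicting $\mathrm{supp}(x) = F$; and since $r$ is integer-valued, $r(S_j) > r(S_{j-1})$ gives $r(S_j) - r(S_{j-1}) \ge 1$. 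Summing over $j$ yields
\[
m = \sum_{j=1}^m 1 \le \sum_{j=1}^m x(S_j \setminus S_{j-1}) = x(S_m) \le x(\delta(w)),
\]
which is the claim.

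The main obstacle is the reduction to a chain: I need to be sure that uncrossing the tight rank sets preserves the dimension of the span of their characteristic vectors, and that the resulting laminar family can be taken to be a chain in the presence of the hypothesis $\mathrm{supp}(x) = F$ (equivalently $x > 0$ on $\delta(w)$). The standard fact is that if $S$ and $S'$ are tight and cross, then $S \cap S'$ and $S \cup S'$ are tight and $\chi_S + \chi_{S'} = \chi_{S \cap S'} + \chi_{S \cup S'}$, so one can always replace a crossing pair by a non-crossing pair without shrinking the span; iterating gives a laminar family. To go from laminar to a chain, I would use that disjoint tight sets at the same "sibling" level, when $x$ is strictly positive, can be merged into their union (also tight) while tracking independence — alternatively, it is cleanest to bound $m$ directly for a laminar family by the same telescoping argument applied level by level, using disjointness of siblings so that the $x$-masses still add up to at most $x(\delta(w))$. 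Either way, once the combinatorial structure is pinned down the counting is the short telescoping sum above; I would present the laminar version to avoid any subtlety in the chain reduction.
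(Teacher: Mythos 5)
Your proof is correct and follows essentially the same route as the paper: uncross the tight rank constraints to a chain, take consecutive differences $S_j \setminus S_{j-1}$, and use tightness together with $\mathrm{supp}(x)=F$ and integrality of the rank function to conclude $x(S_j\setminus S_{j-1})\geq 1$, then telescope. The paper reaches the key inequality a bit more directly — since $x(S_j\setminus S_{j-1}) = r(S_j)-r(S_{j-1})$ is a positive integer it is at least one — whereas your matroid-spanning detour is valid but unnecessary.
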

\begin{proof}
Let $\mathcal{C}\subseteq 2^{\delta(w)}$ be a family with a maximum number
of sets that correspond to linearly independent constraints of the
matroid polytope $P_{N_w}$ that are tight with respect to $x$.
By standard uncrossing arguments, $\mathcal{C}$ can be chosen to be
a chain, i.e.~$\mathcal{C}=\{C_1,\dots, C_p\}$ with
$C_1\subsetneq C_2 \subsetneq \dots \subsetneq C_p$
(see~\cite{jain_2001_factor} for more details).
We have to show that $p\leq x(\delta(w))$.
Let $r$ be the rank function of $N_w$.
Define $C_0=\emptyset$ and for 
$i\in [p]$ let $R_i=C_{i}\setminus C_{i-1}$.
Since $\mathcal{C}$ is a family of tight constraints, we have
\begin{equation}\label{eq:tightChain}
x(R_i)=r(C_{i})-r(C_{i-1}) \quad \forall i\in [p].
\end{equation}
Because $R_i\subseteq supp(x)$, the left-hand side of~\eqref{eq:tightChain}
is strictly larger than zero. Furthermore, the right-hand side is integral and
must therefore be at least one. Hence $x(R_i)\geq 1$ for $i\in [p]$,
which implies $x(\delta(v))\geq \sum_{i=1}^p x(R_i)\geq p$.
\end{proof}

Notice, that the above lemma implies that a basic solution $x$ 
to $(LP1)$ has a support of size at most $3(|W|-1)$,
because of the following.
We can assume that all edges that are not in the support of
$x$ are deleted from the graph.
Due to Lemma~\ref{lem:indepDegConstraints}, at most
$\sum_{w\in W}x(\delta(w))$
linearly independent constraints of the polytopes $\{P_{N_w}\mid w\in W\}$
can be tight with respect to $x$, and since $x$ is in the spanning
tree polytope of $H$, this bound equals $\sum_{w\in W}x(\delta(w))=2(|W|-1)$.
Furthermore, at most $|W|-1$ linearly independent constraints
of $P_{st}$ are tight with respect to $x$ due to uncrossing.
This shows in particular that in the first iteration of the algorithm,
we can find a node $w\in W$ to which a type A degree constraint adaptation
can be applied, because
\begin{equation*}
\sum_{w\in W}\left(|\delta(w)|-x(\delta(w)\right)= 2|F|-2(|W|-1)\leq 4(|W|-1),
\end{equation*}
and hence there must be a node $w\in W$ with $|\delta(w)|-x(\delta(w))\leq 4$.

However, in later iterations, the above reasoning alone is not anymore
sufficient because many vertices do not have degree constraints anymore.
Still, by assuming that no type B constraint adaptation is possible,
and using several ideas to obtain stronger sparsity,
we show that the above approach of finding a good vertex for a type A degree
adaptation by an averaging argument can be extended to a general
iteration.

For the rest of this section, we consider an iteration of the algorithm
at step $\eqref{enum:relaxA}$ with a current basic
solution $x$ to $(LP1)$,
and assume that $|W|>1$, and that no type B degree adaptation can
be applied%
\footnote{Notice that the assumption $|W|>1$ is not redundant.
Whereas we know that at the beginning of the iteration $|W|>1$ did
hold, this could have changed after contracting edges in
step~\eqref{enum:contract}.}.
We then show that there is a type A constraint adaptation that 
can be performed under these assumptions.
This implies that our algorithm never gets
stuck, and hence proves its correctness.

Since we often deal with the \emph{spare} $1-x(f)$ of an edge
$f\in F$, we use the notation $z=1-x$.
Furthermore, we partition $F$ into the sets $F_2, F_1$ and $F_0$
of edges that are contained in $2,1$ and $0$ degree constraints,
respectively. Hence, at the first iteration we have $F_2=F$.
Our goal is to show that
$\sum_{w\in W}z(\delta(w)\cap F_2)= 2z(F_2)\leq 4|Y|$, where
$Y\subseteq W$ is the set of all nodes $w$ with
$\delta(w)\cap F_2\neq \emptyset$. By an averaging argument this 
then implies that there is at least one node $w\in Y$ to which a type
A constraint adaptation can be applied.
Notice that the set $F_2$ cannot be empty
(and hence also $Y\neq \emptyset$): if $F_2=\emptyset$,
then the current $LP1$ corresponds to a matroid intersection problem
since every edge is contained in at most one degree constraints,
and hence all degree constraints form together a single 
matroid over $F$; in this case $LP1$ is integral and a full
spanning tree would have been contracted after step~\eqref{enum:contract},
which leads to $|W|=1$ and contradicts our assumption $|W|>1$.

\begin{lemma}\label{lem:boundZ}
Let $\mathcal{L}$ be a maximum family of linearly independent
spanning tree constraints that are tight with respect to $x$.
Then
\begin{equation*}
2 z(F_2) \leq 2|\mathcal{L}| + 2(|W|-1)
-2(|F_0|+|F_1|) - 2x(F_0).
\end{equation*}
\end{lemma}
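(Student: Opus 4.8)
The plan is to derive the bound from two ingredients followed by a short computation. The first ingredient is that, since $x$ is a basic solution of $(LP1)$ whose support is all of $F$, and since all $0$-edges have been deleted in step~\eqref{enum:solveLP} and all $1$-edges contracted in step~\eqref{enum:contract} (so $0<x(f)<1$ for every $f\in F$ and no box constraint is tight), $x$ is the unique solution of the subsystem of tight spanning tree and tight matroid constraints; hence the rank of these tight constraints equals $|F|$, and therefore
\[
|F| \;\le\; |\mathcal{L}| + \sum_{w\in W}\rho_w ,
\]
where $\rho_w$ is the maximum number of linearly independent constraints of $P_{N_w}$ that are tight at $x$. The second ingredient is a sharpening of Lemma~\ref{lem:indepDegConstraints}: writing $D_w\subseteq\delta(w)$ for the set of edges that are \emph{contained} in $N_w$ (i.e.\ not free), I claim $\rho_w\le x(D_w)$.

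For this sharpening, first observe that no set $C$ with $x(C)=r_{N_w}(C)$ can contain a free edge $f$: extending a basis of $C\setminus\{f\}$ by $f$ gives $r_{N_w}(C)=r_{N_w}(C\setminus\{f\})+1$, so $x(C\setminus\{f\}) = r_{N_w}(C)-x(f) > r_{N_w}(C\setminus\{f\})$ since $x(f)<1$, contradicting $x\in P_{N_w}$. Consequently every tight set of $P_{N_w}$ is contained in $D_w$, and the uncrossing argument of Lemma~\ref{lem:indepDegConstraints} yields a chain $C_1\subsetneq\dots\subsetneq C_p\subseteq D_w$ of tight sets with $p=\rho_w$; exactly as in that proof, each difference $C_i\setminus C_{i-1}$ is nonempty and contained in $\mathrm{supp}(x)=F$, so $x(C_i\setminus C_{i-1})=r_{N_w}(C_i)-r_{N_w}(C_{i-1})$ is a positive integer, whence $\rho_w=p\le x(C_p)\le x(D_w)$.

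It remains to collect terms. Every edge of $F_2$ is contained in the degree constraint at both of its endpoints and thus lies in $D_w$ for both; every edge of $F_1$ lies in exactly one such $D_w$; edges of $F_0$ lie in none. Hence $\sum_{w\in W}x(D_w)=2x(F_2)+x(F_1)$, and so $|F|\le |\mathcal{L}|+2x(F_2)+x(F_1)$. Using $|F|=|F_2|+|F_1|+|F_0|$ and $x(F_2)+x(F_1)=x(F)-x(F_0)=(|W|-1)-x(F_0)$ (the latter because $x\in P_{st}$), we obtain
\begin{align*}
2z(F_2)=2|F_2|-2x(F_2) &\le 2\bigl(|\mathcal{L}|+2x(F_2)+x(F_1)-|F_1|-|F_0|\bigr)-2x(F_2)\\
&= 2|\mathcal{L}| + 2\bigl(x(F_2)+x(F_1)\bigr) - 2(|F_0|+|F_1|)\\
&= 2|\mathcal{L}| + 2(|W|-1) - 2(|F_0|+|F_1|) - 2x(F_0),
\end{align*}
which is the claimed inequality.

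I expect the main obstacle to be the first ingredient, namely carefully justifying that for a basic $x$ the tight constraints have rank exactly $|F|$ and involve only spanning tree and matroid constraints, and then correctly combining this with the per-node bound $\rho_w\le x(D_w)$; the free-edge observation and the final arithmetic are routine once the uncrossing of Lemma~\ref{lem:indepDegConstraints} is available.
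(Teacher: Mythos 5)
Your proof is correct and follows essentially the same route as the paper's. Both arguments start from the basic-solution rank identity $|F|=|\mathcal{L}|+|\mathcal{D}|$, bound the per-node count of tight matroid constraints via the chain/uncrossing argument of Lemma~\ref{lem:indepDegConstraints}, sum over $w$ so that $F_2$-edges are counted twice and $F_1$-edges once, and finish with the same arithmetic using $x(F)=|W|-1$.

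One thing you did that is worth flagging: the paper invokes Lemma~\ref{lem:indepDegConstraints}, which as stated gives only the bound $x(\delta(w))$, but then immediately uses the sharper bound $x(D_w)$ (restricting to edges ``contained'' in $N_w$). That sharpening is needed to get the crucial factors of $2$ and $1$ on $F_2$ and $F_1$, and the paper treats it as obvious. You filled this gap explicitly, observing that a tight set of $P_{N_w}$ cannot contain a free element $f$ (since $r_{N_w}(C)=r_{N_w}(C\setminus\{f\})+1$ while $x(f)<1$ would then violate $x\in P_{N_w}$), so the maximal chain $C_1\subsetneq\dots\subsetneq C_p$ lives inside $D_w$ and $p\le x(C_p)\le x(D_w)$. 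That is exactly the right justification and is a small but genuine improvement in rigor over what the paper writes; otherwise the two proofs are the same.
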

\begin{proof}
We can rewrite $2z(F_2)$ as follows by using the fact that
$x(F)=|W|-1$ (because $x$ is in the spanning tree polytope
of $H$).
\begin{equation}\label{eq:boundZ1}
\begin{aligned}
2z(F_2) &= 2z(F) - 2z(F_0) - 2z(F_1) \\
        &= 2(|F|-x(F)) - 2z(F_0) - 2z(F_1) \\
        &= 2|F| - 2(|W|-1) - 2z(F_0) - 2z(F_1)
\end{aligned}
\end{equation}
Using classical arguments we can bound the
size of the support of $x$, which is by assumption equal to $|F|$,
by the number of linearly independent tight constraints from the
spanning tree polytope and the degree polytopes $P_{N_w}$ for $w\in W$.
In particular $x$ is uniquely defined by the tight spanning tree
constraints $\mathcal{L}$ completed with some set $\mathcal{D}$
of linearly independent degree constraints, and
we have $|F|=|\mathcal{L}|+|\mathcal{D}|$.
The degree constraints $\mathcal{D}$ can be partitioned
into $\mathcal{D}_w$ for $w\in W$, where $\mathcal{D}_w$ are linearly
independent constraints of the matroid polytope $P_{N_w}$.
By Lemma~\ref{lem:indepDegConstraints}, $|\mathcal{D}_w|$ is
bounded by the sum of $x$ over all edges in $\delta(w)$ that are
contained in the degree constraint at $w$.
When summing these bounds up over all $w\in W$, each edge in
$F_2$ is counted exactly twice, and each edge
in $F_1$ exactly once. Hence,
\begin{align*}
|\mathcal{D}| \leq 2x(F_2) + x(F_1)= 2x(F)-x(F_1)-2x(F_0)
 =2(|W|-1)-x(F_1)-2x(F_0).
\end{align*}
Using $|F|=|\mathcal{L}|+|\mathcal{D}|$ and the above bound,
we obtain from~\eqref{eq:boundZ1}
\begin{align*}
2z(F_2)&\leq 2|\mathcal{L}| + 2(|W|-1) - 2\left( z(F_0)+
z(F_1)+2x(F_0)+x(F_1)\right)\\
&= 2|\mathcal{L}| + 2(|W|-1) - 2(|F_0|+|F_1|) - 2x(F_0),
\end{align*}
where the last inequality follows from $z(U)+x(U)=|U|$
for any $U\subseteq F$.
\end{proof}

The size of a family $\mathcal{L}$ of linearly independent
tight spanning tree constraints can easily be bounded by
$|W|-1$ using the fact that one can assume $\mathcal{L}$
to be laminar by standard uncrossing arguments (and
$\mathcal{L}$ contains no singleton sets).
However, this result shows not to be strong enough for
our purposes.
To strengthen this bound we exploit the fact that if
$\mathcal{L}$ contains close to $|W|-1$ sets, then
there are many nodes $w\in W$ that are ``sandwiched''
between two sets of $\mathcal{L}$, i.e., there are
two sets $L_1,L_2\in \mathcal{L}$ with $L_2=L_1\cup \{w\}$,
which in turn implies $x(\delta(w)\cap E[L_2])=1$.
Notice that for any degree two node $w$ which is not in $Q$,
we have $x(U)\neq 1$ for all $U\subseteq \delta(w)$.
Hence, such a node cannot be ``sandwiched'' between
two tight spanning tree constraints, and we expect
that the more such nodes we have, the smaller is $|\mathcal{L}|$.
The following result quantifies this observation.
It is stated in the general context of a spanning tree
polytope of a general connected graph
(not being linked to our degree-constrained problem).

\begin{lemma}\label{lem:stCard}
Let $y$ be a point in the spanning tree polytope for
a given graph $G=(V,E)$, and let
\begin{equation*}
S(G,y)=\{v\in V \mid |\delta(v)|=2,\ y(U)\neq 1 \;\forall U\subseteq \delta(v)\}.
\end{equation*}
Let $\mathcal{L}\subseteq 2^V$ be any linearly independent
family of
spanning tree constraints that are tight with respect to $y$.
Then
\begin{equation*}
|\mathcal{L}|\leq |V|-1-\left\lfloor \frac{1}{2}|S(G,y)|\right\rfloor.
\end{equation*}
\end{lemma}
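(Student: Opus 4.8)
The plan is to put $\mathcal{L}$ into a laminar normal form and then run a charging argument on its Hasse tree, making the degree-two vertices of $S:=S(G,y)$ pay for the gap between $|\mathcal{L}|$ and $|V|-1$. First I would apply standard uncrossing for the spanning tree polytope (see \cite{jain_2001_factor,goemans_2006_minimum}) to replace $\mathcal{L}$ by a laminar family $\mathcal{L}'$ of tight spanning tree constraints with $|\mathcal{L}'|=|\mathcal{L}|$; no member of $\mathcal{L}'$ is empty or a singleton, since those give the zero vector and cannot sit in a linearly independent family. As $y(E[V])=y(E)=|V|-1$, the set $V$ is always tight, so I may assume $V\in\mathcal{L}'$ (adding it costs at most one in $|\mathcal{L}'|$, does not change $S$, and a bound for the enlarged family gives the claimed bound with a unit to spare). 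If $\mathcal{L}\subseteq\{V\}$ the statement is immediate from $|S|\le|V|$ (with $S=\emptyset$ when $|V|\le 2$), so assume $|\mathcal{L}'|\ge 2$; then the Hasse diagram of $\mathcal{L}'$ rooted at $V$ is a tree $\mathcal{T}$ whose root has a child and whose leaves are proper subsets of $V$.

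Next comes an exact count. For $L\in\mathcal{L}'$ let $d(L)$ be the number of children of $L$ in $\mathcal{T}$, let $P(L)$ be the private elements of $L$ (i.e. $L$ minus the union of its children), and $p(L)=|P(L)|$. Each $v\in V$ lies in $P(L)$ for exactly one $L$ (the minimal member of $\mathcal{L}'$ containing it), so $\sum_L p(L)=|V|$; also $\sum_L d(L)=|\mathcal{L}'|-1$, and the number of leaves of $\mathcal{T}$ equals $1+\sum_{L:d(L)\ge 2}(d(L)-1)$. Writing $p(L)=2+(|L|-2)$ for leaves and $p(L)=1+(p(L)-1)$ for one-child nodes and substituting, these three identities combine to
\begin{equation*}
|\mathcal{L}'| \;=\; |V|-1-\big(A+B+C+D\big),\qquad
\begin{aligned}
&A=\textstyle\sum_{d(L)=0}(|L|-2),\quad B=\textstyle\sum_{d(L)=1}(p(L)-1),\\
&C=\textstyle\sum_{d(L)\ge 2}p(L),\quad D=\textstyle\sum_{d(L)\ge 2}(d(L)-2),
\end{aligned}
\end{equation*}
with $A,B,C,D\ge 0$. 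So it suffices to show $A+B+C+D\ge\lfloor |S|/2\rfloor$.

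The structural input is that a degree-two vertex of $S$ cannot be surrounded by a thin layer of tight sets: (a) if $L$ is tight with $|L|=2$, say $L=\{v,u\}$, then $E[L]\subseteq\delta(v)$ and $y(E[L])=1$, so $v\notin S$; (b) if $L,L'$ are tight with $L'\subsetneq L$ and $L\setminus L'=\{v\}$, then $E[L]\setminus E[L']\subseteq\delta(v)$ and $y(E[L]\setminus E[L'])=(|L|-1)-(|L'|-1)=1$, so $v\notin S$. I also need a refinement for size-three sets: a tight $L$ with $|L|=3$ all of whose vertices lie in $S$ must induce a triangle with $\delta(L)=\emptyset$ (count the six edge-endpoints inside $L$: $|E[L]|\ge 2$ forces, via each vertex having degree exactly two, $|E[L]|=3$ and $\delta(L)$ empty), hence $L$ is a connected component of $G$ — impossible once $|V|>3$, since $G$ is connected because $P_{st}\neq\emptyset$. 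Consequences: by (a) every leaf of $\mathcal{T}$ meeting $S$ has size $\ge 3$, and with the refinement (leaves of $\mathcal{T}$ are proper subsets of $V$, so the $|L|=3$ case is covered) one gets $|L|-2\ge\lceil |L\cap S|/2\rceil$ for each leaf $L$; by (b) a one-child node $L$ whose private set meets $S$ has $p(L)\ge 2$, and since $p(L)\ge|P(L)\cap S|$ this gives $p(L)-1\ge\lceil |P(L)\cap S|/2\rceil$; and trivially $p(L)\ge|P(L)\cap S|\ge\lceil |P(L)\cap S|/2\rceil$ for branching nodes. Summing over leaves, one-child nodes and branching nodes, discarding $D\ge 0$, and using that $m\mapsto\lceil m/2\rceil$ is superadditive on nonnegative integers together with $\sum_{L\in\mathcal{L}'}|P(L)\cap S|=|S|$, we obtain $A+B+C+D\ge\sum_{L}\lceil |P(L)\cap S|/2\rceil\ge\lceil |S|/2\rceil\ge\lfloor |S|/2\rfloor$, which with the exact count yields $|\mathcal{L}|=|\mathcal{L}'|\le|V|-1-\lfloor |S|/2\rfloor$.

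The main obstacle is the structural part: recognizing that a degree-two vertex of $S$ appearing as the lone private element of a set of $\mathcal{L}'$ (or filling out a size-two tight set) pins the $y$-weight of a cut around it to the integer $1$, contradicting membership in $S$ — and in particular the size-three leaf refinement, which relies on $G$ being connected. The rest is bookkeeping: choosing the $\lceil\cdot/2\rceil$ charges so that leaves, one-child nodes, and branching nodes all contribute enough at once, and tidying the small cases where $\mathcal{L}\subseteq\{V\}$.
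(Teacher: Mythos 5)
Your proof is correct, and it takes a genuinely different route from the paper's, although both hinge on the same two structural facts: a tight set of size two cannot meet $S$, and if two nested tight sets differ by one vertex $v$, then a cut at $v$ has $y$-weight exactly $1$, so $v\notin S$. The paper's proof, after uncrossing, splits on whether some member of $\mathcal{L}$ is contained in $S$ (in which case connectivity forces $G$ to be a cycle and $\mathcal{L}=\{V\}$), and otherwise \emph{removes} a set $R\subseteq S$ with $|R|\geq\tfrac12|S|$ consisting of $\lceil\tfrac12|S\cap V_L|\rceil$ private elements per $L\in\mathcal{L}$; it then shows that the projected family $\{L\setminus R\}$ stays laminar with no singletons and has the same size, giving $|\mathcal{L}|\leq|V\setminus R|-1$. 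You instead carry out an exact count on the Hasse tree rooted at $V$, $|\mathcal{L}'|=|V|-1-(A+B+C+D)$, and charge the slack $A,B,C$ against $S$-vertices using $\lceil\cdot/2\rceil$-charges at leaves, one-child nodes, and branching nodes. Your size-$3$ leaf refinement (a tight $3$-set contained in $S$ is a whole component) is the piece that substitutes for the paper's ``$L\subseteq S$ implies $L=V=S$'' case; it is necessary because $|L|-2\geq\lceil|L\cap S|/2\rceil$ fails precisely at $|L|=|L\cap S|=3$. The net effect is the same quantitative conclusion: what the paper achieves by deleting half of $S$ and re-bounding, you achieve by an exact identity plus a per-node charging scheme. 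Two small points worth tightening in a write-up: the claim that adding $V$ ``gives the claimed bound with a unit to spare'' should simply say that a bound on $|\mathcal{L}'|$ dominates $|\mathcal{L}|$ since $|\mathcal{L}|\leq|\mathcal{L}'|$ (linear independence of the enlarged family is not needed because your identity and charges only use laminarity and tightness); and the dispatch of the $\mathcal{L}\subseteq\{V\}$ case should note explicitly that $\lfloor|S|/2\rfloor\leq|V|-2$ once $|V|\geq 3$, with $S=\emptyset$ when $|V|\leq 2$, as you indicate.
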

\begin{proof}
To simplify notation let $S=S(G,y)$.
By standard uncrossing arguments
(see for example~\cite{goemans_2006_minimum}),
we can assume that $\mathcal{L}$ is laminar.
We first consider the case that there is a set
$L\in \mathcal{L}$ with $L\subseteq S$.
Let $L$ be a minimal set in $\mathcal{L}$ with this property.
Since $L$ is a tight spanning tree constraint, we have
that $y\big |_{E[L]}$ is in the spanning tree polytope of $G[L]$,
and hence $y(\delta(v)\cap E[L])\geq 1$ for $v\in L$.
As $L\subseteq S$, we have $|\delta(v)|=2$ and $y(e)<1$
for $v\in L$ and $e\in \delta(v)$. This implies that every
vertex in $L$ must have both of its neighbors in $L$
to satisfy $y(\delta(v)\cap E[L])\geq 1$. Since $G$ is
connected, as we assumed that there is a point in the spanning
tree polytope of $G$, we must have $L=V=S$.
Furthermore $|V|\geq 3$, because vertices in $L$ have degree two.
Hence the claim trivially follows since $|\mathcal{L}|=1$.

Now assume that there is no set $L\in \mathcal{L}$
with $L\subseteq S$.
We show that there exists a set $R\subseteq S$ of size
at least $|R|\geq \frac{1}{2}|S|$, such that the
laminar family $\mathcal{L}_R = \{L\setminus R \mid L \in \mathcal{L}\}$
over the elements $V\setminus R$ satisfies the following:
\begin{compactenum}[i)]
\item $\mathcal{L}_R$ has no singleton sets,
\item $|\mathcal{L}_R|=|\mathcal{L}|$, i.e.,
any two sets $L_1, L_2\in \mathcal{L}$ with $L_1\subsetneq L_2$,
satisfy $L_2\setminus L_1\not\subseteq R$.
\end{compactenum}
Notice that this will imply the claim since
$|\mathcal{L}_R|\leq |V\setminus R|-1$, because $\mathcal{L}_R$
is laminar without singleton sets, and hence
$|\mathcal{L}|=|\mathcal{L}_R|
\leq |V\setminus R|-1 \leq |V|-1-\frac{1}{2}|S|$.
It remains to define the set $R$ with the desired properties.
For $L\in \mathcal{L}$, let $V_L\subseteq L$ be all vertices
in $L$ that are not contained in any set $P\in\mathcal{L}$ with
$P\subseteq L$.
For each set $L\in \mathcal{L}$, include an arbitrary set
of $\lceil \frac{1}{2}|S\cap V_L| \rceil$ elements
of $S\cap V_L$ in $R$.
Since the sets $V_L$ for $L\in \mathcal{L}$ are a partition of
all vertices $V$, we clearly have $|R|\geq \frac{1}{2}|S|$.
Furthermore $R$ satisfies the desired properties as we show below.

i) Assume by sake of contradiction that $\mathcal{L}_R$ contains
a singleton set, i.e., there is a set $L\in\mathcal{L}$ with
$|L\setminus R|=1$. We can assume that $L$ is a minimal set
in $\mathcal{L}$.
By assumption we have $L\not\subseteq S$,
and since $R\subseteq S$, the element in $L\setminus R$ is not in $S$.
Hence, $R$ contains all elements $L\cap S$, which is only possible
if $|L\cap S|=1$ and therefore $|L|=2$. However, this implies that
there must be an edges of weight one between the two vertices in $L$,
which contradicts the fact that one of those vertices is in $S$.

ii) Assume by contradiction that there are two sets
$L_1, L_2\in \mathcal{L}$ with $L_1\subsetneq L_2$ that satisfy
$L_2\setminus L_1 \subseteq R$.
We can choose $L_1$ and $L_2$ such that there is no set $L\in \mathcal{L}$
with $L_1\subsetneq L \subsetneq L_2$. By choice of $R$, this can only happen if
$L_2\setminus L_1$ contains exactly one vertex $v\in S$.
This implies $y(\delta(v)\cap E[L_2])=1$, which contradicts the fact
that $v\in S$.
\end{proof}

Lemma~\ref{lem:stCard} can easily be generalized to the subgraph
of a given graph $G$ obtained by deleting the vertices $Q(G,y)$.
This form of the lemma is more useful for our analysis because
of our special treatment of vertices in $Q$.
\begin{lemma}\label{lem:stCardRed}
Let $y$ be a point in the spanning tree polytope of a given graph
$G=(V,E)$ with $y(e)\neq 1\; \forall\:e\in E$,
let $G'=G[V\setminus Q(G,y)]$, and let $y'$ be the projection
of $y$ to the edges in $G'$.
Let $\mathcal{L}$ be any linearly independent family
of spanning tree constraints of $G$ that are tight with respect to $y$. Then 
\begin{equation*}
|\mathcal{L}| \leq |V|-1 -
\left\lfloor \frac{1}{2}|S(G',y')|\right\rfloor.
\end{equation*}
\end{lemma}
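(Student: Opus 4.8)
\textbf{The plan is to} reduce the statement about $G$ to the statement about $G'$ that was already proved in Lemma~\ref{lem:stCard}, by showing that contracting/deleting the vertices of $Q:=Q(G,y)$ costs nothing in the count of linearly independent tight spanning tree constraints beyond what is accounted for by the $|V|-1$ term. The key structural fact I would use is the defining property of $Q$: it was built by repeatedly finding a node $w$ with $y(\delta(w)\cap E[V\setminus Q'])=1$ for the partially-built set $Q'$ and adding $w$ to $Q$. So if $Q=\{w_1,\dots,w_q\}$ in the order they were added, then for each $j$ we have $y\big(\delta(w_j)\cap E[V\setminus\{w_1,\dots,w_{j-1}\}]\big)=1$; equivalently, the set $L_j:=(V\setminus\{w_1,\dots,w_{j-1}\})$ together with $L_j\setminus\{w_j\}$ form a ``sandwiching'' pair of tight spanning tree constraints, since $y(E[L_j])-y(E[L_j\setminus\{w_j\}]) = y(\delta(w_j)\cap E[L_j]) = 1 = |L_j|-(|L_j|-1)$.

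\textbf{First I would} recall that $\mathcal{L}$ can be taken laminar by uncrossing, and then argue that one may extend $\mathcal{L}$ to a larger linearly independent tight laminar family $\widehat{\mathcal{L}}$ that additionally contains all the sets $V\setminus\{w_1,\dots,w_{j-1}\}$ for $j=1,\dots,q$ (these are nested, tight, and adding them preserves linear independence because each introduces the new ``coordinate direction'' $\delta(w_j)$ — this is essentially the standard fact that $Q$-vertices can be peeled off one at a time). \textbf{Then} I would restrict attention to $G'=G[V\setminus Q]$: the sets of $\widehat{\mathcal{L}}$ that are contained in $V\setminus Q$, call them $\mathcal{L}'$, are tight spanning tree constraints of $G'$ with respect to $y'$, they remain linearly independent, and a counting argument on the laminar family $\widehat{\mathcal{L}}$ gives $|\mathcal{L}| \le |\widehat{\mathcal{L}}| - q \le \big(|\mathcal{L}'| + q\big) - q$... more carefully, the laminar structure shows $|\widehat{\mathcal{L}}|$ decomposes so that the $q$ peeling sets account for exactly the difference between $|V|-1$ and $|V\setminus Q|-1$ plus $|\mathcal{L}'|$ minus the original contribution; the upshot I want is $|\mathcal{L}| \le q + |\mathcal{L}'|$ with $|\mathcal{L}'| \le |V\setminus Q|-1-\lfloor\frac12|S(G',y')|\rfloor$ from Lemma~\ref{lem:stCard}. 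Since $|V\setminus Q| = |V|-q$, these combine to $|\mathcal{L}| \le q + (|V|-q) - 1 - \lfloor\frac12|S(G',y')|\rfloor = |V|-1-\lfloor\frac12|S(G',y')|\rfloor$, as claimed.

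\textbf{The hard part will be} making the bookkeeping between $\mathcal{L}$, its laminar replacement, the augmentation by the $q$ peeling sets, and the restriction to $G'$ fully rigorous — in particular verifying that augmenting by the peeling sets $V\setminus\{w_1,\dots,w_{j-1}\}$ genuinely preserves linear independence (one should check that the incidence vectors of $E[V\setminus\{w_1,\dots,w_{j-1}\}]$ are not already in the span of $\mathcal{L}$ together with the earlier peeling sets, which follows because $w_j$ has degree determined to be a single tight edge into the remaining graph, giving a fresh tight coordinate) and that deleting $Q$ does not destroy tightness or connectivity needed to invoke Lemma~\ref{lem:stCard} on $G'$ (connectivity of $G'$ with $y'$ in its spanning tree polytope is exactly the property that makes the definition of $Q$ terminate at a genuine subgraph, i.e. $y'$ lies in the spanning tree polytope of $G'$). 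I expect the hypothesis $y(e)\neq 1$ for all $e$ to be used precisely to guarantee $Q$ is well-defined independently of the peeling order, so that the chain of peeling sets above is canonical and the independence argument goes through cleanly.
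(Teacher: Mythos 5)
Your approach---peeling off the $Q$-vertices via the nested chain of tight constraints $V \supsetneq V\setminus\{w_1\} \supsetneq \cdots \supsetneq V\setminus Q$ and reducing to Lemma~\ref{lem:stCard} on $G'=G[V\setminus Q]$---is essentially the paper's, which carries out the same peeling one vertex at a time by induction on $|Q|$, thereby handling for free the point you correctly flag as the hard part (that a maximum laminar family can be chosen to contain the entire peeling chain while staying linearly independent). Your bookkeeping $|\mathcal{L}| \le q + |\mathcal{L}'| \le q + (|V|-q) - 1 - \lfloor\tfrac{1}{2}|S(G',y')|\rfloor$ is correct, with the small fix that the chain you place inside $\widehat{\mathcal{L}}$ should also include $V\setminus Q$ itself (i.e.\ run $j$ up to $q+1$): with that set present, laminarity forces every member of $\widehat{\mathcal{L}}$ that meets $Q$ to be one of the $q$ proper peeling sets, which is what the count needs.
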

\begin{proof}
By standard uncrossing arguments, we can assume that
$\mathcal{L}$ is a maximal laminar family of tight
spanning tree constraints.
We prove the result by induction on the number of elements in $Q=Q(G,y)$.
If $Q=\emptyset$, then the result follows from Lemma~\ref{lem:stCard}.
Let $q\in Q$ be a possible first element added to $Q$ during
the iterative construction of $Q$, i.e., $y(\delta(q))=1$.
This implies that $V\setminus \{q\}$ is a tight spanning tree constraint.
Let $H=G[V\setminus \{q\}]$, $y_H=y\big\vert_{E[V\setminus\{q\}]}$
and $Q_H=Q(H,y_H)$.
Since $Q_H=Q\setminus \{q\}$, we can apply
the induction hypothesis to the graph $H$
to obtain that any maximal family $\mathcal{L}_H$ of
linearly independent tight spanning tree constraints in $H$
with respect to $y_H$ satisfies
$|\mathcal{L}_H|\leq |V\setminus \{q\}|-1-
\left\lfloor\frac{1}{2}|S(G',y')|\right\rfloor$.
The claim follows by observing that
$\mathcal{L}=\mathcal{L}_H\cup \{V\}$ is a maximal family
of tight spanning tree constraints in $G$, and hence
\begin{equation*}
|\mathcal{L}|=|\mathcal{L}_H|+1 \leq |V|-1
-\left\lfloor\frac{1}{2}|S(G',y')|\right\rfloor.
\end{equation*}
\end{proof}

Combining Lemma~\ref{lem:stCardRed} with Lemma~\ref{lem:boundZ} 
we obtain the following bound, where we use 
$S=S(H[W\setminus Q],x\big\vert_{F[W\setminus Q]})$ to simplify
the notation.
To get rid of the rounding on $\frac{1}{2}|S|$ we use
$2 \lfloor \frac{1}{2} |S|\rfloor\geq |S|-1$.

\begin{corollary}\label{cor:tokenCount}
\begin{equation*}
2z(F_2)\leq 4(|W|-1) - 2(|F_0|+|F_1|) - 2x(F_0) - |S|+1.
\end{equation*}
\end{corollary}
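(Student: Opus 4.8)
The plan is to combine the two previously established bounds in the most direct way possible, since Corollary~\ref{cor:tokenCount} is essentially a bookkeeping consequence of Lemma~\ref{lem:boundZ} and Lemma~\ref{lem:stCardRed}. First I would take the inequality from Lemma~\ref{lem:boundZ},
\begin{equation*}
2z(F_2) \leq 2|\mathcal{L}| + 2(|W|-1) - 2(|F_0|+|F_1|) - 2x(F_0),
\end{equation*}
where $\mathcal{L}$ is chosen to be a maximum family of linearly independent tight spanning tree constraints with respect to $x$. To bound $|\mathcal{L}|$, I would apply Lemma~\ref{lem:stCardRed} with $G=H$, $y=x$, so that $G'=H[W\setminus Q]$ and $y'=x\big\vert_{F[W\setminus Q]}$; here we may invoke this lemma because, by the standing assumptions of the section, edges of weight one have already been contracted, so $x(f)\neq 1$ for all $f\in F$. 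This yields
\begin{equation*}
|\mathcal{L}| \leq |W|-1-\left\lfloor\tfrac{1}{2}|S|\right\rfloor,
\end{equation*}
with $S=S(H[W\setminus Q],x\big\vert_{F[W\setminus Q]})$ as defined.

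Substituting this bound on $|\mathcal{L}|$ into the displayed inequality from Lemma~\ref{lem:boundZ} gives
\begin{equation*}
2z(F_2) \leq 2(|W|-1) - 2\left\lfloor\tfrac{1}{2}|S|\right\rfloor + 2(|W|-1) - 2(|F_0|+|F_1|) - 2x(F_0).
\end{equation*}
Finally, I would eliminate the floor using the elementary estimate $2\lfloor\tfrac{1}{2}|S|\rfloor \geq |S|-1$, which holds for any nonnegative integer $|S|$, so that $-2\lfloor\tfrac{1}{2}|S|\rfloor \leq -|S|+1$. Collecting the two copies of $2(|W|-1)$ into $4(|W|-1)$ then produces exactly
\begin{equation*}
2z(F_2)\leq 4(|W|-1) - 2(|F_0|+|F_1|) - 2x(F_0) - |S|+1,
\end{equation*}
as claimed.

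There is essentially no obstacle here beyond checking that the hypotheses of Lemma~\ref{lem:stCardRed} are met — in particular that the graph $H$ at this stage has no weight-one edges, which is guaranteed by step~\eqref{enum:contract} of the algorithm having been executed before $Q$ is defined and before we reach step~\eqref{enum:relaxA} — and that the family $\mathcal{L}$ used in Lemma~\ref{lem:boundZ} is the same family (or at least one of the same maximum size) to which Lemma~\ref{lem:stCardRed} applies; since Lemma~\ref{lem:stCardRed} bounds \emph{any} linearly independent family of tight spanning tree constraints, this causes no difficulty. The only mildly delicate point is the bookkeeping with the floor function, which the hint $2\lfloor\tfrac{1}{2}|S|\rfloor \geq |S|-1$ already resolves. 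Thus the proof is a short chain of substitutions.
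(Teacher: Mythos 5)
Your proof is correct and matches the paper's own derivation: substitute the bound on $|\mathcal{L}|$ from Lemma~\ref{lem:stCardRed} (applied with $G=H$, $y=x$, which is legitimate since all weight-one edges were contracted in step~\eqref{enum:contract}) into the inequality of Lemma~\ref{lem:boundZ}, then remove the floor via $2\lfloor\tfrac{1}{2}|S|\rfloor\geq |S|-1$. Nothing further to add.
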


The following lemma implies the correctness of our algorithm.
We recall that $Y\subseteq W$ is the set of all nodes $w\in W$
such that $\delta(w)\cap F_2\neq \emptyset$.

\begin{lemma}
There is a node $w\in Y$ such that a type A constraint adaptation can
be applied to $w$.
\end{lemma}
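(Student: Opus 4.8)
\textit{The goal is to exhibit a node $w\in Y$ to which a type A adaptation applies.} For $w\in Y$, the set of edges still contained in both degree constraints at their endpoints is precisely $\delta(w)\cap F_2$, which is nonempty by definition of $Y$; hence such a $w$ admits a type A adaptation as soon as $|\delta(w)\cap F_2|-x(\delta(w)\cap F_2)=z(\delta(w)\cap F_2)\le 4$. Since $\sum_{w\in Y}z(\delta(w)\cap F_2)=2z(F_2)$, by averaging it suffices to prove $2z(F_2)\le 4|Y|$. Substituting the bound of Corollary~\ref{cor:tokenCount} and writing $Z:=W\setminus Y$, this is equivalent to the purely combinatorial inequality
\[
4|Z|\ \le\ 2(|F_0|+|F_1|)+2x(F_0)+|S|+3 .
\]

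To set this up I would first record the elementary facts. Every $F_2$-edge has both endpoints in $Y$, so every edge incident to a node of $Z$ lies in $F_0\cup F_1$; and every node has degree at least $2$, since all $1$-edges have been contracted. Writing $\partial Z$ for the set of edges with exactly one endpoint in $Z$, double counting gives $2(|F_0|+|F_1|)\ge\sum_{w\in Z}|\delta(w)|+|\partial Z|$, and since $4|Z|\le\sum_{w\in Z}|\delta(w)|+\sum_{w\in Z}\max\{0,4-|\delta(w)|\}$, the displayed inequality reduces to bounding the total \emph{deficit} $\sum_{w\in Z}\max\{0,4-|\delta(w)|\}$ of the low-degree nodes of $Z$ by $|\partial Z|+2x(F_0)+|S|+3$. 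Only nodes of degree $2$ and $3$ contribute to this deficit, by $2$ and $1$ respectively.

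The core of the argument is this deficit bound, proved by a case analysis on the low-degree nodes of $Z$ according to their interaction with $Q$, using that no type B adaptation applies. If $w\in Z$ has degree at most $3$, then the set $U$ of edges contained in $N_w$ that are not incident to $Q$ satisfies $z(U)\le|U|\le 3$, so $U=\emptyset$; hence every edge contained in $N_w$ is incident to $Q$. In particular, any edge joining two degree-$\le 3$ nodes of $Z$ lying outside $Q$ belongs to no degree constraint, i.e.\ to $F_0$, which produces an $F_0$-subgraph spanned by the low-degree nodes of $Z$ on which the cut inequalities $x(\delta(w))\ge 1$ force $x$-weight that can be charged against $2x(F_0)$. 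Next, a degree-$2$ node $w\in Z$ that lies outside $Q$ and has no neighbour in $Q$ has reduced degree $2$ and satisfies $x(U)\ne 1$ for all $U\subseteq\delta(w)$ (otherwise $w$ would have been placed in $Q$), so $w\in S$ and is counted by $|S|$; moreover $x(\delta(w))>1$ for such a node. Combining the $|S|$-contribution of these nodes with the $F_0$-weight forced on the subgraph they span accounts for their deficit. The remaining low-degree nodes — those lying in $Q$, and degree-$2$ nodes having a neighbour in $Q$ (reduced degree $1$, hence not in $S$) — are the ones that, in the language of the introduction, must be \emph{treated separately}: they are controlled via the rigidity the fixed tight spanning-tree constraints impose on $Q$ (for instance a degree-$2$ node in $Q$ forces $x(e_1)+x(e_2)=1$ and behaves like a single contracted edge), which either yields an adaptation directly or reduces the instance. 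The additive $3$, together with the $+1$ coming from $2\lfloor\tfrac12|S|\rfloor\ge|S|-1$ already built into Corollary~\ref{cor:tokenCount}, absorbs the boundary losses.

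I expect the last paragraph to be the real obstacle. A single element of $S$ pays for only one unit of deficit, while a degree-$2$ node of $Z$ has deficit $2$; so the $|S|$-term and the $2x(F_0)$-term cannot be handled independently, but must be combined along the $F_0$-subgraph spanned by the low-degree nodes, taking care that each unit of $x$-weight and each element of $S$ is charged at most once. This is precisely the point where the interplay between the matroidal degree polytopes and the spanning-tree polytope is essential — Lemma~\ref{lem:stCardRed} supplies one half of it through $S$, and the inequalities $x(\delta(w))\ge 1$ on the $F_0$-subgraph the other — and cleanly disposing of degree-$2$ nodes lying in $Q$ is the delicate remaining piece. The reductions in the first two paragraphs are routine.
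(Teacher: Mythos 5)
Your reduction to the inequality $4|Z|\le 2(|F_0|+|F_1|)+2x(F_0)+|S|+3$ via Corollary~\ref{cor:tokenCount} and the averaging argument is exactly what the paper does (in fact the paper proves the stronger bound without the $+3$). After that point, however, you depart from the paper's route, and the departure introduces a genuine gap.

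Your step $2(|F_0|+|F_1|)\ge\sum_{w\in Z}|\delta(w)|+|\partial Z|$ throws away the fractional content of the $F_1$-edges. The paper's token scheme assigns to each $F_1$-edge $f$ an \emph{asymmetric} split of its $2$ units: $1+x(f)$ to the endpoint whose degree constraint does \emph{not} contain $f$, and $1-x(f)=z(f)$ to the other endpoint. This is what powers case~(ii) of the paper's argument: a node $w\in\overline{Y}\cap W'$ of $H'$-degree exactly $3$ with no incident edge in $N_w$ gets $|\delta_{H'}(w)|+x(\delta_{H'}(w))\ge 4$ tokens, and the $x(\delta_{H'}(w))$ part is available even when those edges lie in $F_1$ (constrained at the \emph{other} end) rather than $F_0$. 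In your scheme such a node contributes only $3$ to $\sum|\delta(w)|$ and has deficit $1$, but if all three neighbours are high-degree $Z$-nodes and the three edges are $F_1$, none of your pools $|\partial Z|$, $2x(F_0)$, $|S|$ is charged by those edges, so there is nothing to cover that unit. The ``$F_0$-subgraph spanned by the low-degree nodes'' you invoke does not see those edges at all, and the $+3$ slack is a constant and cannot absorb a super-constant number of such nodes.

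The second gap is one you flag yourself: the nodes of $Z\cap Q$. You say they must be ``treated separately'' and ``controlled via the rigidity the fixed tight spanning-tree constraints impose on $Q$,'' but no charging rule is given. This is the part the paper handles first and cleanly: order $Q=\{q_1,\dots,q_p\}$ in an order of insertion, let $F_{q_i}=\{\{q_i,v\}\in F:v\in W\setminus\{q_1,\dots,q_{i-1}\}\}$, and give $q_i$ \emph{all} tokens at \emph{both} endpoints of edges in $F_{q_i}$. Since $x(F_{q_i})=1$ and no edge has weight $1$, $|F_{q_i}|\ge 2$, and each edge carries $\ge 2$ tokens, so $q_i$ gets $\ge 4$; moreover the sets $F_{q_i}$ are pairwise disjoint and disjoint from $F'=F[W\setminus Q]$, so there is no double-charging against the subsequent allocation to $\overline{Y}\cap W'$. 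Your observation that a degree-$2$ node in $Q$ forces $x(e_1)+x(e_2)=1$ is correct but, by itself, does not produce a coherent charging scheme, especially because those two edges' $x$-mass may be needed elsewhere. (Also, one of the cases you set aside — degree-$2$ nodes of $Z\setminus Q$ with a neighbour in $Q$ — is vacuous: $x'$ lies in the spanning-tree polytope of $H'=H[W\setminus Q]$, so every node of $W'$ has $H'$-degree at least $2$ when $|W'|>1$.)

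So the overall architecture is sound and matches the paper, but the combinatorial core is missing. Replacing the lossy double-count by the paper's two-phase token allocation — $Q$ first, then $\overline{Y}\cap W'$ with the three cases $w\in S$, $w\notin S$ with no $\delta_{H'}(w)$-edge in $N_w$, and $w\notin S$ with some $\delta_{H'}(w)$-edge in $N_w$, using the asymmetric $F_1$-split — closes both gaps and even yields the sharper bound $4|Z|\le 2(|F_0|+|F_1|)+2x(F_0)+|S|$.
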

\begin{proof}
Let $\overline{Y}=W\setminus Y$. We will prove that 
\begin{equation}\label{eq:toProveTokenCount}
4|\overline{Y}| \leq  2(|F_0|+|F_1|) + 2x(F_0) + |S|.
\end{equation}
Together with Corollary~\ref{cor:tokenCount} this then implies
$2z(F_2)\leq 4|Y|-3$, which in turn implies by an averaging argument that
there is at least one node in $Y$ to which a type A constraint adaptation
can be applied.
To prove~\eqref{eq:toProveTokenCount} we apply a fractional
token counting argument: we show that if we interpret the
right-hand side of~\eqref{eq:toProveTokenCount}
as a (fractional) amount of tokens, then we can assign those tokens
to the vertices in $\overline{Y}$ such that each vertex in
$\overline{Y}$ gets at least $4$ tokens.

We think of the tokens corresponding
to $2(|F_0|+|F_1|)+2x(F_0)$ as residing
at the endpoints of the edges in $F_0\cup F_1$. Each edge $f\in F_0$
gets $2+2x(f)$ tokens, $1+x(f)$ at each endpoint.
Each edge $f\in F_1$ gets $1+x(f)$ tokens at the endpoint which does
not contain $f$ in its degree constraint, and $1-x(f)$ tokens at
the other endpoint.
The tokens assigned to the endpoints of the edges thus sum up
to $2(|F_0|+|F_1|)+2x(F_0)$.

We start by assigning tokens to vertices in $Q$. By definition of the
vertices in $Q$, we can order the elements in $Q=\{q_1,\dots, q_p\}$ such
that for $i\in [p]$, we have $x(F_{q_i})=1$ where
$F_{q_i}=\{\{q_i,v\}\in F \mid v\in W\setminus \{q_1,\dots, q_{i-1}\}\}$.
Since $x(F_{q_i})=1$ and no edge $f\in F$ satisfies $x(f)=1$ (such an
edge would have been contracted), we have $|F_{q_i}|\geq 2$.
Each vertex $q_i\in Q$ gets all the tokens at both endpoints
of the edges in $F_{q_i}$. Since $|F_{q_i}|\geq 2$, $q_i$ receives
indeed at least four tokens.

Let $H'=(W',F')=H[W\setminus Q]$ be the induced subgraph over
the vertices $W\setminus Q$
, and let $x'=x\big\vert_{F'}$.
Notice that $x'$ is in the spanning tree polytope of $H'$
since the set of edges $U\subseteq F$ that have at least one endpoint
in $Q$ satisfy $x(U)=|U|$, and hence $x'(F')=|F'|-1$.

The remaining tokens are allocated as follows. Each node
$w\in \overline{Y}\cap W'$ gets for every edge $f\in \delta_{H'}(w)$,
the tokens of $f$ at the endpoint at $w$.
Furthermore, every node in $S$ gets an additional token
from the term $|S|$.

The attributed tokens clearly do not exceed
the right-hand side of~\eqref{eq:toProveTokenCount}.
It remains to show that each node $w\in \overline{Y}\cap W'$
gets at least $4$ tokens.
We distinguish the following three cases: (i) $w\in S$, (ii) $w\not\in S$
and none of the edges $\delta_{H'}(w)$ is contained in the degree
constraint at $v$,
and (iii) $w\not\in S$ and at least one edge of $\delta_{H'}(w)$
is contained in the degree constraint at $w$.
Notice that the vertices considered in case (i) are precisely all
vertices in $H'$ of degree two,
because if there was a degree two vertex $w\in W'\setminus S$,
then $w$ would have been included in $Q$.
Hence, all vertices considered in case (ii) or case (iii) have degree
at least $3$ in $H'$.

\emph{Case (i): $w\in S$.}
Because $|\delta_{H'}(w)|= 2$, we have that both edges
in $\delta_{H'}(w)$ are not contained in the degree constraint at $w$,
since otherwise a type B degree adaptation could have
been performed at $w$. Hence, $w$ receives $2+x(f_1)+x(f_2)$ tokens from
those two edges plus one token from $|S|$, resulting in $3+x(f_1)+x(f_2)$
tokens. Since $x'$ is in the spanning tree polytope
of $H'$, we have $x(f_1)+x(f_2)=x(\delta_{H'}(w))\geq 1$,
and thus $w$ receives at least $4$ tokens.

\emph{Case (ii): $w\not\in S$ and none of the edges $\delta_{H'}(w)$
is contained in the degree constraint at $w$.}
The total number of tokens received by $w$ thus equals
$|\delta_{H'}(w)|+x(\delta_{H'}(w))\geq 3 + x(\delta_{H'}(w))$,
since $|\delta_{H'}(w)|\geq 3$.
The claim follows again by observing that $x'$
is in the spanning tree polytope of $H'$, which implies
$x(\delta_{H'}(w))\geq 1$.

\emph{Case (iii): $w\not\in S$ and at least one edge of
$\delta_{H'}(w)$ is contained in the degree constraint at $w$.}
Let $U$ be the set of all edges in $\delta_{H'}(w)$
that are contained in the degree constraint at $w$.
Since no type B degree adaptation can be performed
at $w$, we have $|U|-x(U)>4$.
However, $|U|-x(U)$ is exactly the number of tokens that
$w$ receives from the edges in $U$. Hence, at least $4$
tokens are assigned to $w$.
\end{proof}

\bibliographystyle{plain}
\bibliography{literature}

\end{document}